\newtheorem{Corollary}{Corollary}
\newtheorem{Proposition}{Proposition}
\newtheorem{Theorem}{Theorem}
\newtheorem{Lemma}{Lemma}
\newtheorem{Remark}{Remark}
\newtheorem{Assumption}{Assumption}
\newcounter{Facts}
\definecolor{ZurichRed}{rgb}{1, 0, 0} 
\definecolor{Gray}{gray}{0.85}
\newcommand{\ba}{{\mathbf{a}}}
\newcommand{\e}{{\mathbf{e}}}
\newcommand{\bu}{{\mathbf{u}}}
\newcommand{\bv}{{\mathbf{v}}}
\newcommand{\x}{{\mathbf{x}}}
\newcommand{\z}{{\mathbf{z}}}
\newcommand{\bA}{{\mathbf{A}}}
\newcommand{\R}{\mathbb{R}}
\newcommand{\Hsp}{{\mathcal{H}}}
\newcommand{\I}{{\mathbf{I}}}
\newcommand{\cJ}{{\mathcal{J}}}
\newcommand{\cN}{{\mathcal{N}}}
\newcommand{\cS}{{\mathcal{S}}}
\newcommand{\cT}{{\mathcal{T}}}
\newcommand{\X}{{\mathbf{X}}}
\newcommand{\bbeta}{{\boldsymbol{\beta}}}
\newcommand{\hbeta}{{\hat{\boldsymbol{\beta}}}}
\newcommand{\tbeta}{{\tilde{\boldsymbol{\beta}}}}
\newcommand{\bdelta}{{\boldsymbol{\delta}}}
\newcommand{\bgamma}{{\boldsymbol{\gamma}}}
\newcommand{\bmu}{{\boldsymbol{\mu}}}
\newcommand{\hsigma}{{\hat{\sigma}}}
\newcommand{\htau}{\hat{\tau}}
\newcommand{\btheta}{{\boldsymbol{\theta}}}
\newcommand{\bLambda}{{\boldsymbol{\Lambda}}}
\newcommand{\bOmega}{{\boldsymbol{\Omega}}}
\newcommand{\bSigma}{{\boldsymbol{\Sigma}}}
\newcommand{\hSigma}{{\hat{\boldsymbol{\Sigma}}}}
\newcommand{\tSigma}{{\tilde{\boldsymbol{\Sigma}}}}
\newcommand{\0}{{\mathbf{0}}}
\newcommand{\1}{{\mathbbm{1}}}
\newcommand{\half}{\frac{1}{2}}
\newcommand{\argmin}{\operatornamewithlimits{argmin}}
\newcommand{\argmax}{\operatornamewithlimits{argmax}}
\newcommand{\Corr}{{\rm Corr}}
\newcommand{\Cov}{{\rm Cov}}
\newcommand{\Var}{{\rm Var}}
\newcommand{\sign}{{\rm sign}}
\title{Optimal Estimation of Slope Vector in\\ High-dimensional Linear Transformation Model}
\author{Xin Lu Tan
\vspace{.3cm} \\
Department of Statistics \\
The Wharton School \\
University of Pennsylvania \\
Philadelphia, PA 19104}
\begin{document}

\maketitle

\begin{abstract}
In a linear transformation model, there exists an unknown monotone nonlinear transformation function such that the transformed response variable and the predictor variables satisfy a linear regression model. In this paper, we present CENet, a new method for estimating the slope vector and simultaneously performing variable selection in the high-dimensional sparse linear transformation model.  CENet is the solution to a convex optimization problem and can be computed efficiently from an algorithm with guaranteed convergence to the global optimum.  We show that under a pairwise elliptical distribution assumption on each predictor-transformed-response pair and some regularity conditions, CENet attains the same optimal rate of convergence as the best regression method in the high-dimensional sparse linear regression model.  To the best of our limited knowledge, this is the first such result in the literature.  We demonstrate the empirical performance of CENet on both simulated and real datasets.  We also discuss the connection of CENet with some nonlinear regression/multivariate methods proposed in the literature.
\end{abstract}

\medskip 
\noindent{\sc Key Words:} canonical correlation analysis, elastic net penalty, elliptical distribution, Kendall's tau, optimal rate of convergence, variables transformation.

\section{Introduction}
\label{Sec:Intro}

Recently, there has been significant interest in theoretically appealing and algorithmically efficient regression methods that are suitable for analyzing high-dimensional datasets whose dimension~$p$ is comparable or much larger than the sample size~$n$.  Under the linear regression model, researchers have proposed various computationally efficient regularization methods  for simultaneously estimating the slope vector $\bbeta^*$ and selecting predictors, and substantial progress has been made on understanding the theoretical properties of these regression methods.  See, for example, \cite{Tibshirani1996, Chen1998, FanLi2001, Efron2004, Zou2005, Zou2006, Candes2007, Bickel2009, Buhlmann2011} and the references therein.  In particular, when $\bbeta^*$ is assumed to be sparse with number of non-zero coordinates $s = \|\bbeta^*\|_0 \ll p$, the lasso estimator and the Dantzig selector have been shown to both attain an optimal rate of convergence $\sqrt{s\log(p)/n}$ for estimating~$\bbeta^*$ \citep{Bickel2009}. 

When the relationship between the response variable and the predictors is nonlinear, however, the performance of regression methods based on linear models can be severely compromised.  In this paper, we study a family of sparse transformational regression models where a response variable $y\in\R$ is related to $p$ predictor variables $\x = (x_1, \ldots, x_p) \in \R^p$ through an equation\begin{equation}
\label{Eq:ModelTrans}
h^*(y) = \x^T\bbeta^* + \epsilon, \qquad\epsilon|\x\sim F(\epsilon).
\end{equation}
Here $h^*$ is an unknown, strictly increasing transformation function, $\bbeta^* = (\beta_1^*, \ldots, \beta_p^*)$ is a sparse slope vector of interest, and $\epsilon$ is a noise term independent of $\x$ with unknown distribution~$F$.  Model~\eqref{Eq:ModelTrans} generalizes the linear regression model by allowing an unknown monotone transformation on the response variable $y$.  It has been widely studied in the classical $n\gg p$ setting, but the problem of estimating $\bbeta^*$ in model~\eqref{Eq:ModelTrans} in the high-dimensional setting has received much less attention and a rate of convergence has not been derived.  \cite{HanWang2015} recently proposed a rank-based estimator for $\bbeta^*$ and established consistency but not rate of convergence of their estimator.  Furthermore, the algorithm for computing their estimator does not scale well with increasing dimension and sample size.  In this paper, we propose a computationally efficient method for estimating $\bbeta^*$ in the high-dimensional setting, given $n$ independent and identically distributed observations $\{(\x_i, y_i)\}_{i = 1}^n$ from \eqref{Eq:ModelTrans}.  Since $h^*$ is unknown, for identifiability of the model we assume that $\bbeta^*$ and $\epsilon$ satisfy $\bbeta^{*T}\bSigma_{xx}\bbeta^* = 1$ and $E(\epsilon) = 0$ throughout the paper, where $\bSigma_{xx} = \Cov(\x)$.  Interestingly, we showed that under certain regularity conditions, our estimator achieves a convergence rate of $\sqrt{s\log(p)/n}$ for estimating $\bbeta^*$.  To the best of our limited knowledge, this is the first result in the literature that shows that an estimator for model~\eqref{Eq:ModelTrans} achieves the same optimal rate of convergence as the best regression method in the sparse linear regression model, despite the unknown nonlinearity introduced by $h^*$.  

Model~\eqref{Eq:ModelTrans} is also known as the \emph{linear transformation model} \citep{Doksum1987, Dabrowska1988} and has been studied extensively in the literature in the low-dimensional setting.  It generates many popular econometric and statistical models under different assumptions on $h^*$ and $F$.  For example, when $h^*$ takes the form of a power function and $F$ is a normal distribution, model~\eqref{Eq:ModelTrans} reduces to the familiar Box-Cox transformation models \citep{BoxCox1964, BickelDoksum1981}.  On the other hand, model~\eqref{Eq:ModelTrans} specializes to the Cox's proportional hazard model \citep{Cox1972} and the proportional odds rate model \citep{Pettitt1982, Pettitt1984, Bennett1983} when $F$ denotes the extreme value distribution and the standard logistic distribution, respectively.  More general models for which $h^*$ is unknown and $F$ is specified up to a vector of finite-dimensional parameter were studied in \cite{Doksum1987, Dabrowska1988, Cheng1995, ChenJinYing2002}, among others.  In the case where there are no parametric specifications for both the transformation function and the noise distribution, \cite{Han1987} introduced the maximum rank correlation (MRC) estimator for $\bbeta^*$, treating $h^*$ as a nuisance parameter.  Given an estimate of $\bbeta^*$, \cite{Horowitz1996}, \cite{YeDuan1997}, \cite{Klein2002}, and \cite{Chen2002} further proposed methods for estimating $h^*$.  The MRC approach is based on the idea of maximizing a rank-based correlation between $y$ and $\x^T\bbeta$.  Rank estimation is known to be robust, and retains a relatively high level of efficiency.  However, due to the discontinuity of the rank-based objective function in the MRC approach, the computation of the MRC estimator is intractable especially in high dimensions.  In addition, such a discontinuity renders the theoretical analysis of the MRC estimator difficult.  \cite{LinPeng2013} proposed a computationally simpler smoothing approach as a remedy to the drawbacks of the MRC procedure, and established the $\sqrt{n}$-consistency and asymptotic normality of their estimator.  A more recent work by \cite{HanWang2015} offered a similar approach and extended the study of the estimation problem to the high-dimensional setting.  In fact, the model considered in \cite{HanWang2015} is a more general model than \eqref{Eq:ModelTrans}, and was first introduced in \cite{Han1987} under which the MRC estimator is also applicable.  Although \cite{HanWang2015} characterized the range of the problem dimensions for which their estimator is consistent, they did not establish the convergence rate of their estimator.  Moreover, their algorithm requires substantial running time for large $n$ and/or $p$ and only finds local optimums.  



Another line of work that is related to model~\eqref{Eq:ModelTrans} is based on the idea of sufficient dimension reduction.  In this framework, $y$ is assumed to be independent of $\x$ conditional on the projection of $\x$ onto some lower-dimensional subspaces.  The goal is then to recover the effective dimension reduction (e.d.r.) space, which is the minimal subspace for which the conditional independence property holds.  Model~\eqref{Eq:ModelTrans} can be viewed as a special case in this framework where the e.d.r.\! space of interest is one-dimensional.  Many methods have been proposed in the context of sufficient dimension reduction \citep{Cook1991, Li1992, Cook1998}, but the sliced inverse regression (SIR) method of \cite{DuanLi1991} is the first, yet the most widely studied method.  Although SIR can be applied to model~\eqref{Eq:ModelTrans}, the original implementation of SIR proposed by \cite{DuanLi1991} requires $p$ to be smaller than $n$.  While several proposals extending SIR to the high-dimensional setting have appeared in the literature \citep{LiYin2008, Yu2013, Lin2015}, the rate of convergence of SIR in high dimensions has not yet been derived.  In particular, \cite{LiYin2008} proposed SIR with both $\ell_1$- and $\ell_2$-regularization but did not study the theoretical properties of their method.  \cite{Yu2013} proposed to combine SIR with the Dantzig selector \citep{Candes2007} and established a non-asymptotic rate for the recovery of the e.d.r.\! space, but the sparsity~$s$ is not allowed to scale with $p$ and $n$ in their theoretical guarantees.  A more recent paper by \cite{Lin2015} proposed to apply a preliminary variables screening procedure before fitting SIR.  \cite{Lin2015} gave a consistency result for their method in the high-dimensional setting, but again, a rate of convergence has not been established. 

In this paper, we propose the Constrained Elastic Net (CENet), a new method that combines a robust rank correlation and an elastic net penalty for performing simultaneous estimation and variable selection in the linear transformation model~\eqref{Eq:ModelTrans} in the high-dimensional setting.  Our proposal is inspired by the correspondence between least squares regression and (one-dimensional) canonical correlation analysis (CCA), and is conceptually simple.  CENet is the solution of a convex optimization problem and hence can be computed efficiently.  We showed that under certain regularity conditions, the CENet estimator achieves the optimal convergence rate of $\sqrt{s\log(p)/n}$.  Our convergence rate is established from a random-$\x$ point of view, under a pairwise elliptical assumption on the distribution of each predictor and the transformed response.  We do not make any assumptions on $h^*$ besides it being strictly increasing.  In addition to robustness to nonlinearity in the response, our simulation results also indicate the robustness of CENet to heavy-tailed noise distribution.  Finally, we show that a connection exists between CENet and SIR, and also two other nonlinear multivariate techniques proposed in the literature: the alternating conditional expectation (ACE) method of \cite{Breiman1985} and the additive principal components (APC) method of \cite{DonnellBuja1994}.  Such an observation adds new insight to the understanding of behavior of SIR, ACE, and APC under model~\eqref{Eq:ModelTrans}.

\subsection{Related Models}
When $h^*$ is strictly increasing, model~\eqref{Eq:ModelTrans} belongs to the class of single-index models:
\begin{equation}
\label{ModelSIM}
y = f^*(\x^T\bbeta^*) + \tilde{\epsilon}, \qquad E(\tilde{\epsilon}|\x) = 0.
\end{equation}
This can be seen by noting the equivalence between model~\eqref{Eq:ModelTrans} and
\[y = g^*(\x^T\bbeta^* + \epsilon), \qquad\epsilon|\x\sim F(\epsilon),\]
where $g^* = (h^*)^{-1}$ is the inverse transformation, in which case $f^*(\x'\bbeta^*) = E(y|\x) = E[g^*(\x'\bbeta^* + \epsilon)|\x]$ and $\tilde{\epsilon} = y - E(y|\x)$.  In general, $f^*$ is different from $g^*$ due to the integration with respect to the noise distribution.

The single-index model is well-studied in low dimensions.  Some important proposals for estimating the slope vector $\bbeta^*$ include the semiparametric $M$-estimation approaches \citep{Ichimura1993, HardleHall1993, Delecroix2006} and the average derivative estimation approaches \citep{HardleStoker1989, Powell1989, Hristache2001}.  The $M$-estimators have been shown to possess nice theoretical properties such as consistency and asymptotic efficiency.  Nonetheless, they are rarely implemented in practice as they involve solving difficult non-convex optimization problem.  The average derivative estimators can be computed directly, but they involve local smoothing and yield worse performance as the dimension increases due to data sparseness (the so-called ``curse of dimensionality").  Another drawback for both the $M$-estimators and the average derivative estimators is that they usually require strong smoothness assumptions on $f^*$.  

In the high-dimensional setting, \cite{AlquierBiau2013} provided a PAC-Bayesian analysis for single index models in the case where $\bbeta^*$ is sparse.  \cite{Plan2014} and \cite{PlanVershynin2015} proposed metrically projected marginal regression estimators and generalized Lasso estimators, respectively, in the case where $\bbeta^*$ lies in some known low-dimensional subspace $K\subset\R^p$.  They showed that by exploiting the structural assumptions on  $\bbeta^*$, these estimators attain fast rates of convergence despite the unknown nonlinearity in observations.  However, the clean theoretical results of \cite{Plan2014} and \cite{PlanVershynin2015} are established under the assumption that $\x$ follows a multivariate normal distribution, and it is unclear if similar results hold at a greater generality.  \cite{Radchenko2015} proposed a promising approach for sparse single index models that does not require such a restrictive assumption on $\x$.  Nonetheless, the theoretical analysis requires boundedness on $\x$, smoothness on $f^*$, and subgaussianity of $\epsilon$, and the convergence rate of the proposed estimator is not as fast as ours.

\subsection{Organization}
The remainder of the paper is organized as follows.  In Section~\ref{Sec:Method}, we introduce our new methodology CENet, discuss the intuition behind it, and present an efficient algorithm for computing it.  We study the performance of CENet in Section~\ref{Sec:Main}.  In Section~\ref{Sec:RateofConv}, we establish the rate of convergence of CENet.  We conduct some simulation studies to evaluate the finite-sample performance of CENet in Section~\ref{Sec:Simulation}.  Section~\ref{Sec:Data} gives a real data application of CENet.  We conclude with Section~\ref{Sec:Discussion}, where we give a detailed description of the connection of CENet with SIR, ACE, and APC.  The proofs of the theoretical results in Section~\ref{Sec:RateofConv} are collected in Appendix~A.  Appendix~B contains the proofs of all other results in the paper, while Appendix~C contains the derivation details of our algorithm.  More extensive simulation results can be found in Appendix~D.

\paragraph{\textbf{Notation:}} For any event $E$, $\1(E)$ denotes its indicator function.  For a vector $\ba = (a_1, \ldots, a_p)^T \in\R^p$, we denote the $\ell_0$, $\ell_1$, $\ell_2$, and $\ell_\infty$ (pseudo)-norms of $\ba$ by $\|\ba\|_0 = \sum_{i=1}^p\1(a_i\neq 0)$, $\|\ba\|_1 = \sum_{i=1}^p |a_i|$, $\|\ba\| = (\sum_{i=1}^pa_i^2)^{1/2}$, and $\|\ba\|_\infty = \max_{1\leq i\leq p}|a_i|$, respectively.  For a matrix $\bA = (a_{ij}) \in\R^{p\times q}$, its spectral norm is given by $\|\bA\| = \sup_{\|x\|\leq 1}\|\bA\x\|$.  The notation $[k]$ stands for the index set $\{1, 2, \ldots, k\}$ for a positive integer $k$.  For any set $\cS$, $|\cS|$ denotes its cardinality and $\cS^c$ denotes its complement.  For any subsets $\cS\subseteq[p]$, $\cT\subseteq[q]$, we use $\bA_{\cS\cT}$ to represent the submatrix of $\bA$ with rows and columns indexed by $\cS$ and $\cT$.  For a symmetric matrix $\bA$, we write $\bA\succeq 0$ to indicate that $\bA$ is positive semidefinite.  For any $\bA\succeq 0$, $\bA^{1/2}$ denotes its principal square root that is positive semidefinite and satisfies $\bA^{1/2}\bA^{1/2} = \bA$.  The notation $\lambda_{\min}(\bA)$ and $\lambda_{\max}(\bA)$ stand for the smallest and the largest eigenvalue of $\bA$, respectively.  We write $\I$ for the identity matrix and $\0$ for the vector of all zeros (the respective dimension of which will be clear from context).  For any $a\in\R$, we set $\sign(a) = \1(a>0)-\1(a<0)$. Finally, $c, C, C'$, and variants thereof denote generic positive constants whose value may vary for each occurrence. 

\section{Methodology}
\label{Sec:Method}

In this section, we introduce \emph{CENet} for estimation of $\bbeta^*$ in the linear transformation model.  We are particularly interested in the setting where the dimension $p$ can be much larger than the sample size $n$.  In such a setting, it is common to assume that $\bbeta^*$ is sparse, i.e., $\bbeta^*$ contains many zero entries.

\subsection{The Optimization Problem}
Throughout the paper, we assume model~\eqref{Eq:ModelTrans} holds and we observe $n$ independent and identically distributed pairs $\{(\x_i, y_i): i = 1, \ldots, n\}$ generated from \eqref{Eq:ModelTrans}, where $\x_i = (x_{i1}, \ldots, x_{ip})\in\R^p$ and $y_i\in\R$.  We do not pose any parametric assumptions on the transformation function $h^*$, except that it being strictly monotonically increasing.  Let $\bSigma_{xx} = \Cov(\x)$ and $\bSigma_{xy} = \Cov(\x, h^*(y))$.  Since $h^*$ is unknown, for identifiability of the model we assume that any additive and multiplicative constants are absorbed into $h^*$, so that $\bbeta^{*T}\bSigma_{xx}\bbeta^* = 1$ and $E(\epsilon) = 0$.  Without loss of generality, we also assume that $E(\x) = \0$.  

In practice, when $h^*(y)$ is known, then $\bbeta^*$ can be found as the solution to  
\[
\min_{\bbeta\in\R^p} E[(h^*(y) - \x^T\bbeta)^2].
\]
Since $h^*$ is generally unknown, $\bbeta^*$ cannot be estimated through a direct least squares approach.  However, there is a direct correspondence between least squares regression and (one-dimensional) CCA:
\begin{Proposition}
\label{Prop:OurRep}
Suppose that model~\eqref{Eq:ModelTrans} holds.  Then $\bbeta^*$ is a solution to
\begin{equation}
\label{Eq:Corrstar}
\max_{\bbeta\in\R^p} \Corr(h^*(y), \x^T\bbeta).
\end{equation}
\end{Proposition}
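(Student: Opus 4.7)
The plan is to observe that problem \eqref{Eq:Corrstar} is a scalar canonical correlation problem whose maximizer can be read off from $\bSigma_{xy}$, and then exploit the model equation to identify $\bSigma_{xy}$ with $\bSigma_{xx}\bbeta^*$.

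First I would compute $\bSigma_{xy}$ directly from model~\eqref{Eq:ModelTrans}. Since $\epsilon$ is independent of $\x$ (and mean zero),
\[
\bSigma_{xy} = \Cov(\x, h^*(y)) = \Cov(\x, \x^T\bbeta^* + \epsilon) = \Cov(\x,\x^T\bbeta^*) + \Cov(\x,\epsilon) = \bSigma_{xx}\bbeta^*.
\]
This is the key identity that ties the unknown $h^*$ out of the picture and turns the CCA objective into an explicit expression in terms of $\bbeta^*$.

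Second, I would rewrite the objective in a form amenable to Cauchy--Schwarz. For any $\bbeta\in\R^p$ with $\bbeta^T\bSigma_{xx}\bbeta>0$,
\[
\Corr(h^*(y),\x^T\bbeta)^2 = \frac{(\bbeta^T\bSigma_{xy})^2}{\bbeta^T\bSigma_{xx}\bbeta\cdot \Var(h^*(y))}.
\]
Assuming (as implicit in the setup) that $\bSigma_{xx}\succ 0$, the substitution $\bu = \bSigma_{xx}^{1/2}\bbeta$ and Cauchy--Schwarz give
\[
(\bbeta^T\bSigma_{xy})^2 = (\bu^T\bSigma_{xx}^{-1/2}\bSigma_{xy})^2 \leq \|\bu\|^2\cdot \|\bSigma_{xx}^{-1/2}\bSigma_{xy}\|^2 = (\bbeta^T\bSigma_{xx}\bbeta)\cdot(\bSigma_{yx}\bSigma_{xx}^{-1}\bSigma_{xy}),
\]
with equality if and only if $\bbeta$ is proportional to $\bSigma_{xx}^{-1}\bSigma_{xy}$. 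Plugging in $\bSigma_{xy} = \bSigma_{xx}\bbeta^*$ from the first step shows that the maximizers of $|\Corr(h^*(y),\x^T\bbeta)|$ are precisely the nonzero scalar multiples of $\bbeta^*$, and the identifiability normalization $\bbeta^{*T}\bSigma_{xx}\bbeta^* = 1$ picks out $\bbeta^*$ itself (with the correct sign, since $\Cov(h^*(y),\x^T\bbeta^*) = \bbeta^{*T}\bSigma_{xx}\bbeta^* = 1 > 0$). Hence $\bbeta^*$ is a solution to \eqref{Eq:Corrstar}.

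There is no substantive obstacle here; the only thing to be careful about is the degenerate case in which $\bSigma_{xx}$ is singular, which one can either exclude by assumption or handle by restricting to the range of $\bSigma_{xx}$ and using a generalized inverse. The whole content of the proposition is the covariance computation $\bSigma_{xy} = \bSigma_{xx}\bbeta^*$, which is the mechanism by which the unknown transformation $h^*$ becomes invisible to the correlation criterion and thus motivates the rest of the CENet construction.
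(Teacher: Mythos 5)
Your proof is correct and follows essentially the same route the paper relies on: the covariance identity $\bSigma_{xy}=\Cov(\x,\x^T\bbeta^*+\epsilon)=\bSigma_{xx}\bbeta^*$ (using independence of $\epsilon$ and $\x$), followed by the standard Cauchy--Schwarz/CCA argument showing the correlation is maximized over the ray generated by $\bSigma_{xx}^{-1}\bSigma_{xy}=\bbeta^*$. This is exactly the identity the paper invokes at the start of its proof of Theorem~\ref{Thm:Main} and the equivalence established in Lemma~\ref{Lemma:PopRep}, so no further comparison is needed.
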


In other words, among all linear combinations $\x^T\bbeta$ with $\Var(\x^T\bbeta) = 1$, $\x^T\bbeta^*$ has the highest correlation with $h^*(y)$.  Note that $\bbeta^*$ is only identifiable up to scale in \eqref{Eq:Corrstar}.  The following lemma gives an alternative formulation of \eqref{Eq:Corrstar}, where $\bbeta^*$ with $\bbeta^{*T}\bSigma_{xx}\bbeta^* = 1$ is the unique solution: 

\begin{Lemma}
\label{Lemma:PopRep}
Solving \eqref{Eq:Corrstar} is equivalent to solving the following CCA problem:
\begin{equation}
\label{Eq:CCAmax}
\max_{\bbeta\in\R^p} \bbeta^T\bSigma_{xy} \qquad\text{subject to}\qquad \bbeta^T\bSigma_{xx}\bbeta \leq 1.
\end{equation}
\end{Lemma}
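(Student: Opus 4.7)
The plan is to unpack the correlation in closed form and then do two reductions: first from the correlation-maximization problem to a constrained-linear maximization on the unit ellipsoid $\{\bbeta : \bbeta^T\bSigma_{xx}\bbeta = 1\}$, and second from that equality-constrained problem to the inequality-constrained CCA problem as stated.

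First, I would write
\[
\Corr(h^*(y),\x^T\bbeta) \;=\; \frac{\Cov(h^*(y),\x^T\bbeta)}{\sqrt{\Var(h^*(y))}\,\sqrt{\Var(\x^T\bbeta)}} \;=\; \frac{\bbeta^T\bSigma_{xy}}{\sqrt{\Var(h^*(y))}\,\sqrt{\bbeta^T\bSigma_{xx}\bbeta}},
\]
which is well-defined on $\{\bbeta : \bbeta^T\bSigma_{xx}\bbeta > 0\}$ and, crucially, invariant under the substitution $\bbeta\mapsto c\bbeta$ for any $c>0$. Therefore the set of maximizers of \eqref{Eq:Corrstar} is a union of rays, and the problem is equivalent to optimizing over the normalized sphere $\{\bbeta : \bbeta^T\bSigma_{xx}\bbeta = 1\}$. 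Under this constraint the denominator becomes the constant $\sqrt{\Var(h^*(y))}$, so \eqref{Eq:Corrstar} reduces to
\[
\max_{\bbeta\in\R^p} \bbeta^T\bSigma_{xy} \qquad\text{subject to}\qquad \bbeta^T\bSigma_{xx}\bbeta = 1.
\]

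Second, I need to upgrade the equality constraint to the inequality one in \eqref{Eq:CCAmax}. The containment $\{\bbeta^T\bSigma_{xx}\bbeta = 1\}\subset \{\bbeta^T\bSigma_{xx}\bbeta \leq 1\}$ shows that the inequality problem has value at least that of the equality problem. For the other direction, I would observe that the linear objective $\bbeta\mapsto \bbeta^T\bSigma_{xy}$ attains its maximum over the compact ellipsoid $\{\bbeta^T\bSigma_{xx}\bbeta \leq 1\}$ on the boundary: given any interior feasible point $\bbeta$ with $0 < \bbeta^T\bSigma_{xx}\bbeta < 1$, the rescaled vector $\bbeta/\sqrt{\bbeta^T\bSigma_{xx}\bbeta}$ lies on the boundary and produces a strictly larger value whenever $\bbeta^T\bSigma_{xy} > 0$; and the origin has objective value $0$, which is never optimal since $\bSigma_{xy} = \bSigma_{xx}\bbeta^*\neq \0$ under the model identifiability assumption (so one can always find a feasible $\bbeta$ with $\bbeta^T\bSigma_{xy} > 0$, e.g.\ $\bbeta = \bbeta^*$ itself, which satisfies $\bbeta^{*T}\bSigma_{xy} = \bbeta^{*T}\bSigma_{xx}\bbeta^* = 1$). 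Consequently the maximum over the ball equals the maximum over its boundary, and the two problems share the same maximizers.

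Combining the two reductions yields the equivalence claimed. The only mildly delicate point is the boundary argument in the second step, which is why I would explicitly exhibit $\bbeta^*$ as a feasible point with strictly positive objective value in order to rule out the degenerate case $\bSigma_{xy} = \0$; I expect this to be the step that requires the most care, while the first reduction is a routine rescaling.
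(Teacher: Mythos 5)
Your proof is correct and follows essentially the same route as the paper's: rewrite the correlation as the scale-invariant ratio $\bbeta^T\bSigma_{xy}/(\bbeta^T\bSigma_{xx}\bbeta)^{1/2}$, and use the rescaling $\bbeta \mapsto \bbeta/\sqrt{\bbeta^T\bSigma_{xx}\bbeta}$ to show the inequality-constrained maximum is attained on the boundary. Your additional step of exhibiting $\bbeta^*$ as a feasible point with strictly positive objective, ruling out the degenerate case $\bSigma_{xy}=\0$, is a minor refinement that the paper's proof leaves implicit.
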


To obtain an estimator $\hbeta$ for $\bbeta^*$, a simple idea is to replace $\bSigma_{xx}$ and $\bSigma_{xy}$ in \eqref{Eq:CCAmax} with their empirical estimates $\hSigma_{xx}$ and $\hSigma_{xy}$, and solve the resulting optimization problem.  However, such an approach is problematic in two respects:  (1) without knowledge of $h^*$, $\bSigma_{xy}$ is not estimable; (2) the sample covariance matrix $\hSigma_{xx} = \frac{1}{n}\sum_{i=1}^n\x_i\x_i^T$ is a natural estimator for $\bSigma_{xx}$ but is singular when $p\gg n$, so there might exist infinitely many solutions to such an empirical version of \eqref{Eq:CCAmax}.

To address the first issue, we note that for $j = 1, \ldots, p$,
\[(\bSigma_{xy})_j = \Cov(x_{1j}, h^*(y_1)) = \sigma_y\sigma_j\rho_j,\]
where $\sigma_y = [\Var(h^*(y_1))]^{1/2}$, $\sigma_j = [\Var(x_{1j})]^{1/2}$, and $\rho_j = \Corr(x_{1j}, h^*(y_1))$.  To estimate $\sigma_j$, we use $\hat{\sigma}_j = [(\hSigma_{xx})_{jj}]^{1/2}$, the sample standard deviation of $\{x_{1j}, \ldots, x_{nj}\}$.  To estimate $\rho_j$, we exploit the monotonic assumption on $h^*$ and consider the use of Kendall's tau, a classical nonparametric correlation estimator.  The Kendall's tau statistic is defined as
\begin{equation}
\label{tau}
\htau_j = \frac{2}{n(n-1)}\sum_{i<\ell}\sign(x_{ij}-x_{\ell j})\sign(y_i-y_\ell),
\end{equation}
Note that the Kendall's tau statistic is invariant under strictly monotone transformations of $x_{ij}$'s and $y_i$'s.  The population version of Kendall's tau is given by 
\begin{align*}
\tau_j &= E[\sign(x_{1j}-x_{2j})\sign(y_1-y_2)] \\
&= E[\sign(x_{1j}-x_{2j})\sign(h^*(y_1)-h^*(y_2))],
\end{align*}
where the second equality holds under the assumption that $h^*$ is strictly monotonically increasing.  

One can show that Kendall's tau $\tau_j$ is related to Pearson correlation $\rho_j$ through
\begin{equation}
\label{Eq:rho}
\rho_j = \sin\left(\frac{\pi}{2}\tau_j\right),
\end{equation}
when $(\x, \epsilon)$ follows a multivariate normal distribution (see, e.g., \cite{Cramer1946}, p. 290), or, more generally, a joint elliptical distribution \citep{Lindskog2003}.  Motivated by such a relationship, we propose to estimate $\rho_j$ by the transformed version of Kendall's tau:
\begin{equation}
\label{Eq:hrho}
\hat{\rho}_j = \sin\left(\frac{\pi}{2}\hat{\tau}_j\right).
\end{equation}
We then define as $\hSigma_{xy}$ our ``estimator" for $\bSigma_{xy}$, with $j^{th}$ entry given by
\begin{equation}
\label{Eq:Estxy}
(\hSigma_{xy})_j = \sigma_y\hat{\sigma}_j\sin\left(\frac{\pi}{2}\hat{\tau}_j\right).
\end{equation}

\begin{Remark}
Since $\bbeta^*$ is only identifiable up to scale, there is no need to estimate~$\sigma_y$.  Without loss of generality, we can set $\sigma_y$ in \eqref{Eq:Estxy} to 1. 
\end{Remark}

We wish to estimate $\bbeta^*$ in the case $p\gg n$, and also to induce sparsity on our estimator~$\hbeta$.  To address the second issue of singularity of $\hSigma_{xx}$, we propose to solve the following constrained optimization problem:
\begin{equation}
\label{Eq:Opt}
\min_{\bbeta\in\R^p} \left\{-\bbeta^T\hSigma_{xy} + \alpha_1\|\bbeta\|_1+ \alpha_2\|\bbeta\|^2\right\} \qquad\text{subject to}\qquad \bbeta^T\hSigma_{xx}\bbeta \leq 1.
\end{equation}
Here we turn the optimization problem \eqref{Eq:CCAmax} into a convex problem and we regularize the optimization criterion $-\bbeta^T\hSigma_{xy}$ by an $\ell_1+\ell_2$ elastic net penalty \citep{Zou2005}, where $\alpha_1\geq 0$ and $\alpha_2\geq0$ are tuning parameters.  We refer to the solution to \eqref{Eq:Opt} as \emph{CENet}, standing for Constrained Elastic Net.  

The formulation~\eqref{Eq:Opt} of CENet is inspired by the formulation of the CCA problem in \cite{Gao2014}.  Different from the CCA formulation in \cite{Gao2014} which only employs an $\ell_1$-penalty, in \eqref{Eq:Opt} an additional $\ell_2$-penalty is needed because of the use of the robust estimator $\hSigma_{xy}$.  In such a case, when $p>n$, $\hSigma_{xx}$ is singular and there might exist vectors belonging to the null space of $\hSigma_{xx}$ that are not orthogonal to $\hSigma_{xy}$, such that the criterion in \eqref{Eq:Opt} with $\alpha_2 = 0$ is unbounded below.  Indeed, the following lemma gives a sufficient condition for the existence and uniqueness of solution to problem~\eqref{Eq:Opt} across all values of $n$ and $p$.  The key is that $\alpha_2>0$ ensures that we have a strictly convex criterion.
\begin{Lemma}
\label{Lemma:Soln}
Suppose that $\alpha_1\geq 0$ and $\alpha_2>0$.  Then there exists a unique solution $\hbeta$ to problem~\eqref{Eq:Opt}.
\end{Lemma}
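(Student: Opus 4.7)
The plan is to verify the standard recipe for existence and uniqueness in convex optimization: the feasible set is nonempty, closed, and convex; the objective is continuous, coercive, and strictly convex. Existence then comes from the Weierstrass-type argument (continuous coercive function attains its minimum on a closed set), and uniqueness from strict convexity on a convex domain.

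First I would check the feasible set. Because $\hSigma_{xx}$ is a sample covariance matrix, it is positive semidefinite, so the quadratic form $\bbeta \mapsto \bbeta^T \hSigma_{xx}\bbeta$ is continuous and convex. Its sublevel set $C = \{\bbeta \in \R^p : \bbeta^T \hSigma_{xx}\bbeta \leq 1\}$ is therefore closed and convex, and it is nonempty since $\mathbf{0} \in C$. Note I do \emph{not} claim $C$ is bounded, since $\hSigma_{xx}$ may be singular when $p > n$ — this is precisely why coercivity of the objective will have to do the job.

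Second, I would verify the objective $f(\bbeta) := -\bbeta^T \hSigma_{xy} + \alpha_1 \|\bbeta\|_1 + \alpha_2 \|\bbeta\|^2$ is strictly convex and coercive. Convexity of $f$ is immediate: the linear term is affine, $\alpha_1 \|\cdot\|_1$ is convex for $\alpha_1 \geq 0$, and $\alpha_2 \|\cdot\|^2$ with $\alpha_2 > 0$ is strictly convex. The sum of a strictly convex function and convex functions is strictly convex. For coercivity, Cauchy–Schwarz gives
\[
f(\bbeta) \;\geq\; -\|\hSigma_{xy}\|\,\|\bbeta\| + \alpha_2 \|\bbeta\|^2,
\]
which tends to $+\infty$ as $\|\bbeta\| \to \infty$; the crucial use of $\alpha_2 > 0$ appears here.

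Third, I would conclude as follows. Let $f^* = \inf_{\bbeta \in C} f(\bbeta)$, which is finite since $f(\mathbf{0}) = 0$. Pick any minimizing sequence $\{\bbeta^{(k)}\} \subset C$; by coercivity it is bounded, so a subsequence converges to some $\hbeta$. Closedness of $C$ gives $\hbeta \in C$, and continuity of $f$ gives $f(\hbeta) = f^*$, so a minimizer exists. If $\tbeta$ were a second minimizer, then $(\hbeta + \tbeta)/2 \in C$ by convexity of $C$, and strict convexity of $f$ would yield $f((\hbeta + \tbeta)/2) < f^*$, a contradiction. Hence the minimizer is unique.

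There is really no serious obstacle here; the only subtle point is to resist assuming the feasible set is bounded and instead to extract compactness from coercivity of $f$ — the reason the statement requires $\alpha_2 > 0$ rather than merely $\alpha_2 \geq 0$.
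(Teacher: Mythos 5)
Your proof is correct and takes essentially the same route as the paper: uniqueness from strict convexity of the objective (guaranteed by $\alpha_2>0$), and existence from coercivity of $f$ combined with closedness of the feasible set. If anything, your version is slightly tighter than the paper's, which splits into the cases $n>p$ (bounded feasible set) and $n<p$ (compact sublevel sets of $f$) and merely asserts coercivity, whereas you handle both cases uniformly and supply the explicit Cauchy--Schwarz bound $f(\bbeta)\geq -\|\hSigma_{xy}\|\,\|\bbeta\|+\alpha_2\|\bbeta\|^2$.
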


The following lemma, on the other hand, gives the range of $(\alpha_1, \alpha_2)$ for which $\hbeta = \0$.
\begin{Lemma}
\label{Lemma:CompSparse}
The solution to \eqref{Eq:Opt} is completely sparse if and only if $\alpha_1\geq\|\hSigma_{xy}\|_\infty$.
\end{Lemma}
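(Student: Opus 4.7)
The plan is to prove both directions by examining first-order (subgradient) optimality at $\bbeta = \0$, exploiting the fact that $\bbeta = \0$ lies strictly inside the feasible region $\{\bbeta : \bbeta^T\hSigma_{xx}\bbeta \leq 1\}$, so the quadratic constraint is inactive and can be ignored in the KKT conditions at $\0$.

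For the ``if'' direction (sufficiency), I would argue directly rather than through KKT. Suppose $\alpha_1 \geq \|\hSigma_{xy}\|_\infty$. For any feasible $\bbeta$, Hölder's inequality yields
\[
-\bbeta^T\hSigma_{xy} \;\geq\; -\|\bbeta\|_1\,\|\hSigma_{xy}\|_\infty \;\geq\; -\alpha_1\|\bbeta\|_1,
\]
so the objective in \eqref{Eq:Opt} satisfies
\[
-\bbeta^T\hSigma_{xy} + \alpha_1\|\bbeta\|_1 + \alpha_2\|\bbeta\|^2 \;\geq\; (\alpha_1-\|\hSigma_{xy}\|_\infty)\|\bbeta\|_1 + \alpha_2\|\bbeta\|^2 \;\geq\; 0,
\]
with equality attained at $\bbeta = \0$ (whose objective value is $0$). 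Hence $\hbeta = \0$ is a minimizer; under the standing assumption $\alpha_2 > 0$ invoked in Lemma~\ref{Lemma:Soln}, it is the unique minimizer.

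For the ``only if'' direction (necessity), I would write a subgradient optimality condition. Since $\0$ is feasible with slack ($\0^T\hSigma_{xx}\0 = 0 < 1$), complementary slackness forces the constraint multiplier $\mu$ to vanish at $\hbeta = \0$, so the Lagrangian reduces to the unconstrained objective. The subdifferential at $\0$ is then
\[
\partial\bigl(-\bbeta^T\hSigma_{xy}+\alpha_1\|\bbeta\|_1+\alpha_2\|\bbeta\|^2\bigr)\Big|_{\bbeta=\0} \;=\; -\hSigma_{xy} + \alpha_1\,[-1,1]^p,
\]
so $\0$ being a minimizer requires $\hSigma_{xy} \in \alpha_1[-1,1]^p$, i.e.\ $\|\hSigma_{xy}\|_\infty \leq \alpha_1$. (The edge case $\alpha_1 = 0$ forces $\hSigma_{xy} = \0$, which is consistent.)

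There is no real obstacle here; the only thing to be careful about is justifying that the quadratic constraint plays no role at the origin, which follows because $\0$ is strictly interior to the feasible set. Everything else is a short convexity/Hölder computation together with the standard subgradient of the $\ell_1$-norm at the origin.
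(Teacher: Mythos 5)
Your proof is correct and takes essentially the same route as the paper, which verifies the KKT/subgradient stationarity condition at $\bbeta=\0$ (where the quadratic constraint contributes nothing) and reads off $\|\hSigma_{xy}\|_\infty\leq\alpha_1$ from the $\ell_1$-subdifferential $[-1,1]^p$. The only cosmetic difference is that you establish sufficiency by a direct H\"older computation rather than by noting that the subgradient condition is both necessary and sufficient for a convex problem; both are fine.
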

Hence, when selecting $\alpha_1$, we need never consider a value larger than that in Lemma~\ref{Lemma:CompSparse}.

\subsection{ADMM Algorithm for CENet}
Although \eqref{Eq:Opt} is a convex problem, closed form solution is not available.  We consider using the Alternating Direction Method of Multipliers (ADMM) to solve \eqref{Eq:Opt} \citep{Boyd2011}.

We first rewrite the optimization problem as follows:
\begin{equation}
\label{Eq:ADMM}
\min_{\bbeta\in\R^p} f(\bbeta) + g(\btheta)\qquad\text{subject to}\qquad \hSigma_{xx}^{1/2}\bbeta - \btheta = \0,
\end{equation}
where
\[f(\bbeta) = -\bbeta^T\hSigma_{xy} + \alpha_1\|\bbeta\|_1 + \alpha_2\|\bbeta\|^2, \qquad g(\btheta) = \infty\1(\|\btheta\|>1).\]  
Then the augmented Lagrangian of \eqref{Eq:ADMM} takes the form
\[L_\eta(\bbeta, \btheta, \bgamma) = f(\bbeta) + g(\btheta) + \bgamma^T(\hSigma_{xx}^{1/2}\bbeta-\btheta) + \frac{\eta}{2}\|\hSigma_{xx}^{1/2}\bbeta-\btheta\|^2,\]
where $\eta>0$ is an ADMM parameter.  We present the ADMM algorithm for solving \eqref{Eq:Opt} in Algorithm~\ref{Algo}.  Derivations of the algorithm can be found in Appendix~C.

\begin{algorithm}[!h]
\caption{ADMM algorithm for CENet}
\begin{algorithmic}
\small
\STATE \noindent{{\bf Inputs:} Estimates $\hSigma_{xx}$ and $\hSigma_{xy}$, penalty parameters $\alpha_1$ and $\alpha_2$, ADMM parameter $\eta$, and tolerance level $\epsilon$.}
\vspace{-0.5em}  
\begin{enumerate}[(1)] \itemsep -0.25em
\item Initialize $k = 0$, $\btheta^0 = 0$, $\bgamma^0 = 0$.
\item Repeat:
\begin{enumerate}[(a)] \itemsep -0.75em
\vspace{-1em}
\item update for $\bbeta$ (solving a lasso problem):
\vspace{-0.5em}
\[\bbeta^{k+1} \leftarrow  \argmin_{\bbeta\in\R^p} \left\{\left\|\z - \left(\hSigma_{xx} + \frac{2\alpha_2}{\eta}\I\right)^{1/2}\bbeta\right\|^2 + \frac{2\alpha_1}{\eta}\|\bbeta\|_1\right\},\] 
where
\vspace{-0.5em}
\[\z = \left(\hSigma_{xx} + \frac{2\alpha_2}{\eta}\I\right)^{-1/2}\left(\hSigma_{xx}^{1/2}\btheta^k + \frac{1}{\eta}\hSigma_{xy} - \frac{1}{\eta}\hSigma_{xx}^{1/2}\bgamma^k\right);\]

\item update for $\btheta$ (projection onto the unit ball):
\vspace{-0.5em}
\[\btheta^{k+1} \leftarrow \frac{\hSigma_{xx}^{1/2}\bbeta^{k+1} + \frac{1}{\eta}\bgamma^k}{\max\left\{\|\hSigma_{xx}^{1/2}\bbeta^{k+1} + \frac{1}{\eta}\bgamma^k\|, 1\right\}};\]

\item update for $\bgamma$:
\vspace{-0.5em}
\[\bgamma^{k+1} \leftarrow \bgamma^k + \eta(\hSigma_{xx}^{1/2}\bbeta^{k+1} - \theta^{k+1});\]

\item $k \leftarrow k+1$;
\end{enumerate}
\vspace{-1em}
until $\max\left\{\|\bbeta^{k+1}-\bbeta^k\|, \|\btheta^{k+1}-\btheta^k\|\right\} < \epsilon$.
\end{enumerate}
\vspace{-0.5em}
\noindent{{\bf return} $\hbeta = \bbeta^k$.}
\end{algorithmic}
\label{Algo}
\end{algorithm}

\newpage
\section{Performance of CENet}
\label{Sec:Main}

In this section, we study the performance of CENet.  We first give our main results in Section~\ref{Sec:RateofConv}, where we establish the rate of convergence of CENet under some regularity conditions.  In Section~\ref{Sec:Simulation}, we explore the finite-sample performance of CENet through simulation studies.  Finally, we demonstrate an application of CENet on a real dataset in Section~\ref{Sec:Data}.

\subsection{Rate of Convergence}
\label{Sec:RateofConv}
We first state the main assumptions used to establish the rate of convergence of CENet.

\begin{Assumption}
\label{Assump:PairEllip}
For $j = 1, \ldots, p$, $(x_j, h^*(y))$ is pairwise elliptical.
\end{Assumption}
Under model~\eqref{Eq:ModelTrans}, Assumption~\ref{Assump:PairEllip} holds, for instance, when $(\x, \epsilon)$ follows a joint elliptical distribution.  Such an assumption is sufficient for \eqref{Eq:rho} to hold \citep{Lindskog2003}.  We examine the sensitivity of CENet to violations of Assumption~\ref{Assump:PairEllip} in Section~\ref{Sec:Simulation}. 

\begin{Remark}
More generally, \cite{Kendall1949} showed that when a bivariate random variable $(x, y)$ is not normal but the Pearson correlation $\rho$ exists, an approximation of the Kendall's tau using the bivariate Gram-Charlier series expansion up to the fourth-order cumulants yield
\[\tau \approx \frac{2}{\pi}\sin^{-1}(\rho) + \frac{1}{24\pi(1-\rho^2)^{3/2}}\left\{(\kappa_{40}+\kappa_{04})(3\rho-2\rho^3) - 4(\kappa_{31}+\kappa_{13}) + 6\rho\kappa_{22}\right\},\]
where $\kappa_{40} = \mu_{40}-3$, $\kappa_{31} = \mu_{31} - 3\rho$, $\kappa_{22} = \mu_{22}-2\rho^2-1$ are bivariate cumulants.  This means that we may expect \eqref{Eq:rho} to be approximately true even in some cases where Assumption~\ref{Assump:PairEllip} is violated. 
\end{Remark}


The following subgaussian assumption on $\x$ and boundedness condition on $\bSigma_{xx}$ are adopted to ensure that a restricted eigenvalue condition holds with high probability for $\hSigma_{xx}$.  The restricted eigenvalue condition is a key assumption used in bounding the estimation error of lasso-type estimators in the high-dimensional sparse linear regression model \citep{Bickel2009, Buhlmann2011}.
\begin{Assumption}
\label{Assump:Main}
\emph{}
\begin{enumerate}[(i)]
\item $\x$ is subgaussian, with subgaussian norm $\|\x\|_{\psi_2} := \sup_{\|\bu\| = 1} \|\bu^T\x\|_{\psi_2} \leq K$, where $\|z\|_{\psi_2} = \sup_{p\geq 1}p^{-1/2}(E|z|^p)^{1/p}$;
\item There exists a positive constant $M>1$ such that $M^{-1} \leq \lambda_{\min}(\bSigma_{xx}) < \lambda_{\max}(\bSigma_{xx}) \leq M$.
\end{enumerate}
\end{Assumption}

\begin{Remark}
Assumptions~\ref{Assump:PairEllip} and~\ref{Assump:Main}(i) are satisfied when $(\x, \epsilon)$ is multivariate normal.  More generally, suppose that $(\x, \epsilon)$ follows a mixture variance normal distribution, i.e., $(\x, \epsilon)^T \stackrel{d}{=} \bmu + r\z$, where $\bmu\in\R^{p+1}$ is a constant vector, $\z \sim \cN_{p+1}(\0, \bSigma)$ and $r\geq 0$ is a scalar random variable independent of $\z$.  Suppose further that $0<r<B$ almost surely, where $B>0$ is some constant.  Then $(\x, \epsilon)$ is joint elliptical and $\x$ is subgaussian, and so Assumptions~\ref{Assump:PairEllip} and~\ref{Assump:Main}(i) are also satisfied.
\end{Remark}

We now state our main results:
\begin{Theorem}
\label{Thm:Main}
Consider the linear transformation model~\eqref{Eq:ModelTrans} with true slope vector $\bbeta^*$ satisfying $0 < \|\bbeta^*\|_0\leq s$.  Suppose that Assumptions~\ref{Assump:PairEllip} and \ref{Assump:Main} hold, and 
\begin{equation}
\label{Thm:DimCond}
n \geq C_1s\log p,
\end{equation}
for some sufficiently large constant $C_1>0$.  For any constant $C'>0$, there exists positive constants $\gamma_1, \gamma_2$, and $C$ depending only on $C_1, C', K$, and $M$, such that when the tuning parameters $\alpha_1 = \gamma\sqrt{\log(p)/n}$ for $\gamma\in [\gamma_1, \gamma_2]$ and $\alpha_2\|\bbeta^*\|_\infty \leq \alpha_1$,
\[
\|\hbeta-\bbeta^*\|^2 \leq Cs\alpha_1^2,
\]
with probability at least $1 - 6\exp(-C'\log p)$.
\end{Theorem}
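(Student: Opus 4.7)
My plan is to adapt the classical basic-inequality / restricted-eigenvalue (RE) template from lasso theory, exploiting the population Lagrangian identity $\bSigma_{xy}=2\lambda^*\bSigma_{xx}\bbeta^*$ (with $2\lambda^*=\sqrt{\bSigma_{xy}^T\bSigma_{xx}^{-1}\bSigma_{xy}}$) that arises as the first-order condition for the population CCA problem in Lemma~\ref{Lemma:PopRep}. This identity is the linchpin that turns the otherwise linear-in-$\bbeta$ CENet criterion into a quadratic form in $\bdelta:=\hbeta-\bbeta^*$. The first stochastic ingredient is the deviation bound $\|\hSigma_{xy}-\bSigma_{xy}\|_\infty \lesssim \sqrt{\log p/n}$ with probability at least $1-Cp^{-C'}$. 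Each $\hat{\tau}_j$ is a $U$-statistic whose kernel is bounded by $1$, so Hoeffding's inequality for $U$-statistics gives $|\hat{\tau}_j-\tau_j|\lesssim\sqrt{\log p/n}$; Lipschitz continuity of $\sin(\pi\cdot/2)$ transfers this rate to $\hat{\rho}_j-\rho_j$, and Assumption~\ref{Assump:Main}(i) yields the same rate for $\hat{\sigma}_j-\sigma_j$ via standard subgaussian concentration of sums of squares, after which a union bound over $j\in[p]$ completes the $\ell_\infty$ bound. In parallel, Assumption~\ref{Assump:Main} and \eqref{Thm:DimCond} yield, through standard sample-covariance results for subgaussian vectors, on the same high-probability event: (i) an RE condition $\bu^T\hSigma_{xx}\bu \geq c\|\bu\|^2$ on the cone $\{\|\bu_{S^c}\|_1\leq C\|\bu_S\|_1\}$, (ii) the max-norm bound $\|\hSigma_{xx}-\bSigma_{xx}\|_{\max}\lesssim\sqrt{\log p/n}$, and (iii) an $s$-sparse operator-norm bound $\|(\hSigma_{xx}-\bSigma_{xx})_{SS}\|\lesssim\sqrt{s\log p/n}$.

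\paragraph{From optimality to a quadratic form.} Since $\bbeta^{*T}\bSigma_{xx}\bbeta^*=1$ need not imply $\bbeta^{*T}\hSigma_{xx}\bbeta^*\leq 1$, I first replace $\bbeta^*$ by its feasible rescaling $\tbeta=\bbeta^*/\max\{1,\sqrt{\bbeta^{*T}\hSigma_{xx}\bbeta^*}\}$, which by (iii) differs from $\bbeta^*$ in $\ell_2$ by at most $O(\sqrt{s\log p/n})$. Optimality of $\hbeta$ against $\tbeta$ then gives the basic inequality $(\hbeta-\tbeta)^T\hSigma_{xy} \geq \alpha_1(\|\hbeta\|_1-\|\tbeta\|_1)+\alpha_2(\|\hbeta\|^2-\|\tbeta\|^2)$. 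Substituting $\hSigma_{xy}=\bSigma_{xy}+(\hSigma_{xy}-\bSigma_{xy})$ and the identity $\bSigma_{xy}=2\lambda^*\bSigma_{xx}\bbeta^*$, together with the polarization $2\bdelta^T\bSigma_{xx}\bbeta^*=(\hbeta^T\bSigma_{xx}\hbeta-1)-\bdelta^T\bSigma_{xx}\bdelta$ and the constraint $\hbeta^T\hSigma_{xx}\hbeta\leq 1$, I obtain an inequality of the form
\[
\lambda^*\bdelta^T\bSigma_{xx}\bdelta + \alpha_2\|\bdelta\|^2 \;\leq\; \lambda^*\,\hbeta^T(\bSigma_{xx}-\hSigma_{xx})\hbeta + \|\bdelta\|_1\|\hSigma_{xy}-\bSigma_{xy}\|_\infty + \alpha_1\big(\|\bbeta^*\|_1-\|\hbeta\|_1\big) + 2\alpha_2|\bbeta^{*T}\bdelta| + R,
\]
where $R$ collects terms arising from $\tbeta-\bbeta^*$ and is of the same or lower order. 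The key observation for the $\ell_2$ penalty is that $\bbeta^{*T}\bdelta=\bbeta^{*T}_S\bdelta_S$, so the hypothesis $\alpha_2\|\bbeta^*\|_\infty\leq\alpha_1$ yields $2\alpha_2|\bbeta^{*T}\bdelta|\leq 2\alpha_1\|\bdelta_S\|_1$ with \emph{no} $\|\bdelta_{S^c}\|_1$ contribution. Choosing $\gamma$ large enough that $\|\hSigma_{xy}-\bSigma_{xy}\|_\infty\leq \alpha_1/10$, the usual $\ell_1$ arithmetic then forces the cone condition $\|\bdelta_{S^c}\|_1\leq C\|\bdelta_S\|_1$; RE next gives $\bdelta^T\bSigma_{xx}\bdelta\geq c\|\bdelta\|^2$, and combined with $\|\bdelta_S\|_1\leq\sqrt{s}\|\bdelta\|$ I solve the resulting quadratic inequality to conclude $\|\bdelta\|^2\leq Cs\alpha_1^2$.

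\paragraph{Main obstacle.} The most delicate step is controlling the ``constraint-slack'' error $\lambda^*\hbeta^T(\bSigma_{xx}-\hSigma_{xx})\hbeta$, because $\hbeta$ is not itself sparse. I would decompose along $\hbeta=\bbeta^*+\bdelta$: the $\bbeta^*$--$\bbeta^*$ piece is handled directly by the sparse operator-norm bound (iii); the cross piece is bounded by $\|\bbeta^*\|_1\|\hSigma_{xx}-\bSigma_{xx}\|_{\max}\|\bdelta\|_1$, which is controlled once the cone condition is established; and the $\bdelta$--$\bdelta$ piece once again uses the sparse restricted quadratic form, valid on the cone. Because the cone condition is itself a conclusion of the argument, the derivation is self-referential and has to be closed with a careful accounting of constants; condition \eqref{Thm:DimCond} together with the flexibility in the multiplicative constant $\gamma$ is precisely what is needed so that every such error term becomes lower order and can be absorbed into $\lambda^*\bdelta^T\bSigma_{xx}\bdelta$ on the left.
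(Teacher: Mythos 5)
Your overall architecture (basic inequality against a feasible surrogate of $\bbeta^*$, cone condition, restricted eigenvalues for subgaussian designs, solve a quadratic) and your stochastic ingredients (Hoeffding for the $U$-statistics $\hat\tau_j$ plus Lipschitzness of $\sin$, subgaussian concentration for $\hat\sigma_j$, union bounds) match the paper's. The gap is in the step you flag as the ``main obstacle,'' and it is more serious than an accounting of constants: your route makes the normalization error enter the basic inequality \emph{additively at first order}, which destroys the rate. Concretely, after polarizing with the population identity $\bSigma_{xy}=\bSigma_{xx}\bbeta^*$ and invoking $\hbeta^T\hSigma_{xx}\hbeta\leq 1$, your inequality carries the additive term $\hbeta^T(\bSigma_{xx}-\hSigma_{xx})\hbeta$, whose $\bbeta^*$--$\bbeta^*$ piece is $\bbeta^{*T}(\bSigma_{xx}-\hSigma_{xx})\bbeta^*$. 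Even with the sharpest available control (Bernstein for the subexponential variables $(\x_i^T\bbeta^*)^2$), this is of order $\sqrt{\log p/n}$, and with the $s$-sparse operator-norm bound it is $\sqrt{s\log(ep/s)/n}$. Either way, an additive term $\epsilon$ in $c\|\bdelta\|^2\leq\cdots+\epsilon$ yields $\|\bdelta\|^2\gtrsim\epsilon$, and since $s\log p/n<1$ under \eqref{Thm:DimCond} we have $\sqrt{s\log p/n}\gg s\log p/n$; the same issue afflicts your remainder $R$, since $(\bbeta^*-\tbeta)^T\hSigma_{xy}\approx \tilde d-1$ is again of order $\sqrt{\log p/n}$. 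So the argument as written proves only $\|\hbeta-\bbeta^*\|^2\lesssim\sqrt{s\log p/n}$, not $Cs\alpha_1^2$. Your claim that these terms are ``of the same or lower order'' is exactly the point that fails.

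The paper's proof is engineered precisely to avoid this. It compares $\hbeta$ against the \emph{empirically} renormalized $\tbeta=\bbeta^*/(\bbeta^{*T}\hSigma_{xx}\bbeta^*)^{1/2}$ and keeps the entire quadratic form in $\hSigma_{xx}$: writing $\tSigma_{xy}=\hSigma_{xx}\bbeta^*=\tilde d\,\hSigma_{xx}\tbeta$, one gets $-\bdelta^T\tSigma_{xy}=\tilde d\,(\hSigma_{xx}^{1/2}\tbeta)^T\hSigma_{xx}^{1/2}(\tbeta-\hbeta)\geq \tfrac12\|\hSigma_{xx}^{1/2}\bdelta\|^2-|\tilde d-1|\,\|\hSigma_{xx}^{1/2}\bdelta\|$, so the normalization error $e=|\tilde d-1|$ appears only as a \emph{cross term} multiplying the quantity being bounded; after solving the quadratic it contributes $e^2\lesssim (s+\log(ep/s))/n$, which is of the target order $s\alpha_1^2$. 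There is then no constraint-slack term at all, and $\|\hSigma_{xx}^{1/2}\bdelta\|$ is converted to $\|\bdelta\|$ only at the end via restricted eigenvalues of $\hSigma_{xx}$ on a generalized cone. A secondary, fixable issue in your sketch: bounding the cross piece by $\|\bbeta^*\|_1\|\hSigma_{xx}-\bSigma_{xx}\|_{\max}\|\bdelta\|_1$ loses a factor of $\sqrt s$ (since $\|\bbeta^*\|_1$ can be of order $\sqrt s$); you need the sharper bound $\|(\hSigma_{xx}-\bSigma_{xx})\bbeta^*\|_\infty\lesssim\sqrt{\log p/n}$, obtained by Bernstein's inequality applied to the subexponential products $x_{ij}\x_i^T\bbeta^*$, which is exactly how the paper controls $\|\tSigma_{xy}-\bSigma_{xy}\|_\infty$. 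To repair your proof you would essentially have to restructure it along these lines.
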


Since the linear transformation model contains the linear regression model as a special case, the rate in Theorem~\ref{Thm:Main} is optimal.  The following is an immediate consequence of Theorem~\ref{Thm:Main}:
\begin{Corollary}
\label{Cor:Main}
Under the same conditions as in Theorem~\ref{Thm:Main}, for any constant $C'>0$, there exist positive constants $\gamma_1, \gamma_2$, and $C$ depending only on $C_1, C', K$, and $M$, such that when $\alpha_1 = \gamma\sqrt{\log(p)/n}$ for $\gamma\in [\gamma_1, \gamma_2]$ and $\alpha_2\|\bbeta^*\|_\infty \leq \alpha_1$,
\[
\left\|\frac{\hbeta}{\|\hbeta\|}-\frac{\bbeta^*}{\|\bbeta^*\|}\right\|^2 \leq Cs\alpha_1^2,
\]
with probability at least $1 - 6\exp(-C'\log p)$.
\end{Corollary}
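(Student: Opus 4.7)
The plan is to deduce Corollary~1 from Theorem~1 by invoking a standard algebraic inequality that converts a bound on $\|\hbeta - \bbeta^*\|$ into a bound on the distance between the unit vectors $\hbeta/\|\hbeta\|$ and $\bbeta^*/\|\bbeta^*\|$, provided that $\|\bbeta^*\|$ is bounded away from zero.

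First I would establish a lower bound on $\|\bbeta^*\|$. The identifiability convention $\bbeta^{*T}\bSigma_{xx}\bbeta^* = 1$ together with Assumption~\ref{Assump:Main}(ii) gives
\[
1 \;=\; \bbeta^{*T}\bSigma_{xx}\bbeta^* \;\leq\; \lambda_{\max}(\bSigma_{xx})\,\|\bbeta^*\|^2 \;\leq\; M\,\|\bbeta^*\|^2,
\]
so that $\|\bbeta^*\| \geq M^{-1/2}$. Next I would record the elementary inequality that for any vectors $\ba, \bb\in\R^p$ with $\bb\neq \0$,
\[
\left\|\frac{\ba}{\|\ba\|} - \frac{\bb}{\|\bb\|}\right\| \;\leq\; \frac{2\,\|\ba - \bb\|}{\|\bb\|},
\]
which one verifies by writing $\ba/\|\ba\| - \bb/\|\bb\| = (\ba-\bb)/\|\bb\| + \ba(\|\bb\|-\|\ba\|)/(\|\ba\|\|\bb\|)$ and then bounding each term by $\|\ba-\bb\|/\|\bb\|$ via the reverse triangle inequality.

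Applying this inequality with $\ba = \hbeta$ and $\bb = \bbeta^*$, squaring, and using $\|\bbeta^*\|^2 \geq 1/M$, I get
\[
\left\|\frac{\hbeta}{\|\hbeta\|} - \frac{\bbeta^*}{\|\bbeta^*\|}\right\|^2 \;\leq\; \frac{4\,\|\hbeta-\bbeta^*\|^2}{\|\bbeta^*\|^2} \;\leq\; 4M\,\|\hbeta-\bbeta^*\|^2.
\]
On the high-probability event furnished by Theorem~\ref{Thm:Main}, the right-hand side is at most $4MC s\alpha_1^2$, which yields the claim upon absorbing $4M$ into the constant $C$ (renaming it).

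The one small point that needs attention is that the ratio $\hbeta/\|\hbeta\|$ be well defined, i.e., $\hbeta \neq \0$. This is automatic on the good event: Theorem~\ref{Thm:Main} together with the dimension condition $n \geq C_1 s\log p$ yields $\|\hbeta - \bbeta^*\|^2 \leq Cs\alpha_1^2 = C\gamma^2 s\log(p)/n \leq C\gamma_2^2/C_1$, which can be made strictly smaller than $\|\bbeta^*\|^2 \geq 1/M$ by taking $C_1$ large enough (since the theorem allows $C_1$ to depend on the other constants), forcing $\hbeta \neq \0$. No genuine obstacle arises here; the entire argument is two lines of algebra on top of Theorem~\ref{Thm:Main}.
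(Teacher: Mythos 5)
Your proposal is correct and follows essentially the same route as the paper: both arguments split $\hbeta/\|\hbeta\|-\bbeta^*/\|\bbeta^*\|$ into two pieces each controlled by $\|\hbeta-\bbeta^*\|/\|\bbeta^*\|$, and then use $1/\|\bbeta^*\|\leq\sqrt{M}$, which follows from $\bbeta^{*T}\bSigma_{xx}\bbeta^*=1$ and $\lambda_{\max}(\bSigma_{xx})\leq M$, to conclude $\|\hbeta/\|\hbeta\|-\bbeta^*/\|\bbeta^*\|\|^2\leq 4M\|\hbeta-\bbeta^*\|^2\leq Cs\alpha_1^2$. Your additional remark verifying $\hbeta\neq\0$ on the good event is a reasonable extra precaution that the paper omits, but it does not change the substance of the argument.
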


\subsection{Simulation Studies}
\label{Sec:Simulation}

We study the finite-sample performance of CENet in this section.  Due to the use of Kendall's tau, we expect CENet to be robust to both nonlinear monotone transformations on the response variable and heavy-tailed noise distributions.  In addition, we are interested in the sensitivity of CENet to violations of Assumption~\ref{Assump:PairEllip} in Section~\ref{Sec:RateofConv}.  We examine these issues in the following simulation studies.  For comparison, we also evaluate the performance of the lasso and the smoothed MRC estimator of \cite{HanWang2015} in our simulation studies.  More extensive simulation results can be found in Appendix~D.

\paragraph{\textbf{Simulation settings:}}  We consider two scenarios for the transformation function:\\
\[\emph{Scenario 1}: h^*(y) = y, \qquad\emph{Scenario 2}: h^*(y) = y^{1/3}.\]
The first scenario concerns the linear model whereas the second scenario concerns a linear transformation model.  For each scenario, we generate data according to model~\eqref{Eq:ModelTrans} with $\bbeta^* = (1, 2, 3, 4, 0, \ldots, 0)$, $\x\sim\cN(\0, \bSigma)$ with $\bSigma_{ij} = 0.3^{|i-j|}$, and we sample $\epsilon$ independent of $\x$ from one of the following distributions:
\begin{enumerate}[$\quad$(a)] 
\item normal: $\cN(0, \sigma^2)$;
\item contaminated normal: $0.8\cN(0, \sigma^2) + 0.2\text{Cauchy}(\text{location}=0$, $\text{scale}=10\sigma)$;
\item mixture normal: $0.5\cN(-3, \sigma^2) + 0.5\cN(3, \sigma^2)$;
\item centralized gamma: $\epsilon = \sigma(z - \sqrt{3})$, where $z\sim \Gamma(\text{shape}=3$, $\text{rate}=\sqrt{3})$.
\end{enumerate}
Each of the (uncontaminated) noise distribution above has mean 0 and variance~$\sigma^2$.  We calibrate $\sigma^2$ to achieve a desired level of $R^2 := \Var(\x^T\bbeta^*)/[\Var(\x^T\bbeta^*)+\sigma^2]$.  Under such a data generating mechanism, Assumption~\ref{Assump:PairEllip} is not satisfied except for noise distribution (a).  

Figure~\ref{Figure:ErrorDist} shows the densities for each of the four noise distributions.  Compared to the normal distribution which is unimodal and symmetric, the contaminated normal distribution has heavier tails whereas the mixture normal distribution is bimodal and the gamma distribution is skewed. 
\begin{figure}[htp]
\centering
\includegraphics[angle=0,width=5.5in]{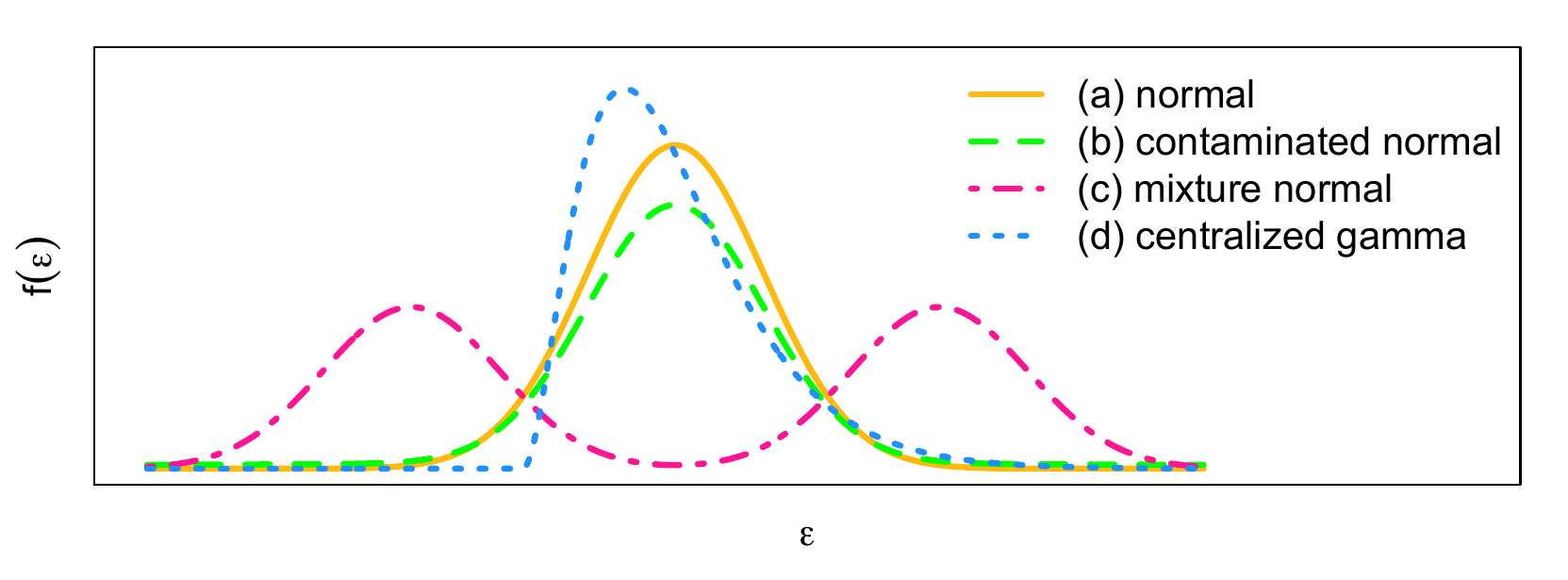}
\caption{Densities for the noise distributions considered in the simulation settings.}
\label{Figure:ErrorDist}
\end{figure}

\vspace{-1em}
\paragraph{\textbf{Performance measures:}} For fair comparison of different methods, we assess the performance of an estimator $\hbeta = (\hat{\beta}_1, \ldots, \hat{\beta}_p)$ by the standardized squared estimation error 
\[
\min \left\{\left\|\frac{\hbeta}{\|\hbeta\|} - \frac{\bbeta^*}{\|\bbeta^*\|}\right\|^2, \left\|\frac{\hbeta}{\|\hbeta\|} + \frac{\bbeta^*}{\|\bbeta^*\|}\right\|^2\right\}.
\] 
In the case $\hbeta = \0$, we simply set the estimation error as $\|\bbeta^*/\|\bbeta^*\|\|^2 = 1$.  Each method we consider has a tuning parameter that controls the sparsity of $\hbeta$ ($\alpha_1$ for CENet, $\lambda$ for lasso and MRC).  We compute the estimation error as a function of the number of non-zero coordinates of $\hbeta$.

To measure the accuracy of support recovery for $\bbeta^* = (\beta_1^*, \ldots, \beta_p^*)$, we also consider plotting the receiver operating characteristic (ROC) curves, which is the plot of the true positive rate (TPR) against the false positive rate (FPR):
\[
\text{TPR} = \frac{|\{j: \hat{\beta}_j \neq 0, \beta^*_j \neq 0\}|}{|\{j: \beta^*_j \neq 0\}|} \quad \text{and} \quad
\text{FPR} = \frac{|\{j: \hat{\beta}_j\neq 0, \beta^*_j = 0\}|}{|\{j: \beta^*_j = 0\}|}. 
\]
TPR gives the proportion of nonzero elements in $\bbeta^*$ that are correctly estimated to be nonzero, whereas FPR gives the proportion of zero elements in $\bbeta^*$ that are incorrectly estimated to be nonzero. 

\paragraph{\textbf{Implementation details:}}
For each scenario and noise distribution, we generate 100 simulated datasets with dimension $p = 100$ and sample size $n = 50$ (high-dimensional setting), $n = 200$ (low-dimensional setting).  For CENet, we fix $\alpha_2 = 10^{-8}, 10^{-7}$, or $10^{-6}$ and consider a range of $\alpha_1$.  In all numerical results reported in this section, we set the ADMM parameter $\eta = 2$ and the tolerance level $\epsilon = 10^{-6}$ in Algorithm~\ref{Algo}.  We implement the smoothed MRC estimator using the code provided by the authors.  MRC requires the specification of a smoothing function and corresponding smoothing parameter value.  We choose a Gaussian CDF approximation and use its default smoothing parameter value, and consider a range of $\lambda$.  For lasso, we use the implementation given in the \texttt{R}-package \texttt{gam} \citep{Hastie2015} and consider a range of $\lambda$.  

\paragraph{\textbf{Results:}}  Since both CENet and MRC are invariant to monotone transformation of the response variable $y$, we combine the results for scenarios 1 and 2 in Figures~\ref{Figure:Rsq0.6n50} and \ref{Figure:Rsq0.6n200}.  Figure~\ref{Figure:Rsq0.6n50} shows the results for each method under different noise distributions, when $n = 50, p=100$, and $R^2 = 0.6$.  The panels on the left column correspond to the plots of average squared estimation error (MSE) over average number of non-zero coordinates of estimated slopes, while the panels on the right column correspond to the average ROC curves.  When the noise term follows a contaminated normal distribution, CENet performs the best while lasso performs the worst under both scenarios 1 and 2.  Like CENet, MRC is much more robust to contaminated normal noise than lasso.  Note that under both scenarios, the lasso estimated slopes tend to be dense.  Moreover,  the MSE is large and the estimation error curves stay relatively flat across different sparsity levels of the estimated slopes.  It is therefore not surprising that lasso also has poor variable selection performance under the contaminated normal noise distribution.  For other noise distributions, when the estimated slope is sparse, CENet has comparable MSE to lasso under scenario 1, and it has much smaller MSE than lasso (under scenario 2) and MRC (under both scenarios).  Similarly, the variable selection performance of CENet is comparable to lasso under scenario 1, and is consistently better than lasso under scenario 2 and MRC under both scenarios.  For all noise distributions, we also see that the overall estimation and variable selection performance of CENet improves as $\alpha_2$ gets smaller.  Based on our limited simulation experience, we recommend $\alpha_2 = 10^{-8}$ when each predictor variable is standardized to have unit variance.  

Although MRC is designed for the high-dimensional setting, we see from Figure~\ref{Figure:Rsq0.6n50} that its MSE curves are quite flat and take relatively large values compared to CENet when $n = 50, p =100$, across all noise distributions.  A plot of its MSE against tuning parameter~$\lambda$ (see Figure~\ref{Figure:Rsq0.6MSE} in Appendix~D) suggests that MRC suffers from convergence issue in such a low signal-to-noise ratio setting.  As we increase the sample size to $n = 200$, we see from Figure~\ref{Figure:Rsq0.6n200} that the convergence issue dissipates.  The estimation error curves of MRC first decrease then increase with respect to the number of non-zero coordinates of estimated slopes, just like those of all other methods (except for lasso under the contaminated normal noise).  The estimation performance of MRC improves significantly compared to when $n = 50$.  We observe similar improvement when $R^2$ is increased from 0.6 to 0.9 (see Appendix~D).  Under the contaminated normal noise distribution, CENet still performs the best and lasso still performs the worst under both scenarios.  For other noise distributions, CENet and MRC perform comparably to lasso under scenario~1 and outperform lasso under scenario~2 in terms of estimation.  CENet performs much better than MRC in terms of variable selection.  Interestingly, the performance of lasso under scenario~2 is not as bad as one might have anticipated when the noise distribution is not heavy-tailed.  We also observe that when $n = 200, p=100$, the performance of CENet is relatively insensitive to the values of~$\alpha_2$.  

In summary, CENet has comparable performance to lasso under the linear model, but is considerably more robust to heavy-tailed noise distributions in both the linear and the transformation model.  Moreover, it is not too sensitive to the underlying noise distributions when the noise is sampled independently from the predictors.  Its performance improves as $\alpha_2$ becomes smaller.  On the other hand, MRC is reasonably robust to both non-linearity in the response variable and heavy-tailed noise distributions, but it does not perform as well as CENet in all the cases we consider, unless the signal-to-noise ratio is high.  

\begin{figure}[htp]
\centering
\includegraphics[angle=0,width=5in]{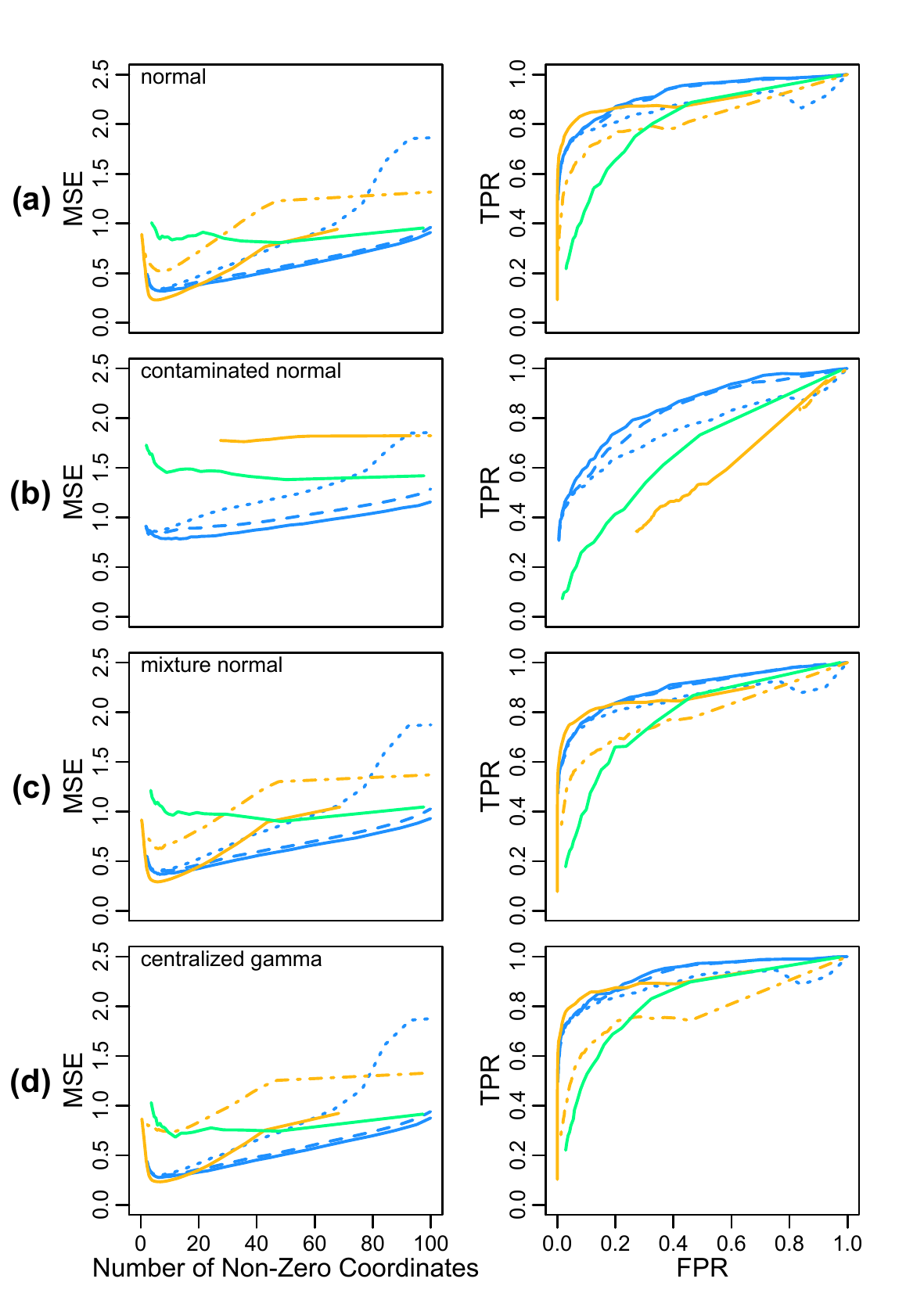}
\caption{Results for lasso under scenarios 1 and 2, CENet under both scenarios, and MRC under both scenarios, averaged over 100 replications, for $n = 50, p = 100, R^2 = 0.6$.  Left column: plot of average standardized squared estimation error against average number of non-zero coordinates of $\hbeta$; right column: plot of average ROC curves.  The rows ordered from top to bottom contain results for noise distributions (a)$-$(d).  Colored lines correspond to lasso under scenario 1 (\protect\includegraphics[height=0.25em]{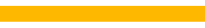}), lasso under scenario 2 (\protect\includegraphics[height=0.25em]{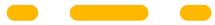}), CENet with $\alpha_2 = 10^{-8}$ (\protect\includegraphics[height=0.25em]{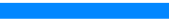}), $\alpha_2 = 10^{-7}$ (\protect\includegraphics[height=0.25em]{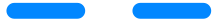}), $\alpha_2 = 10^{-6}$ (\protect\includegraphics[height=0.25em]{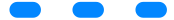}) under both scenarios, and MRC (\protect\includegraphics[height=0.25em]{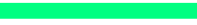}) under both scenarios.}
\label{Figure:Rsq0.6n50}
\end{figure}

\begin{figure}[htp]
\centering
\includegraphics[angle=0,width=5in]{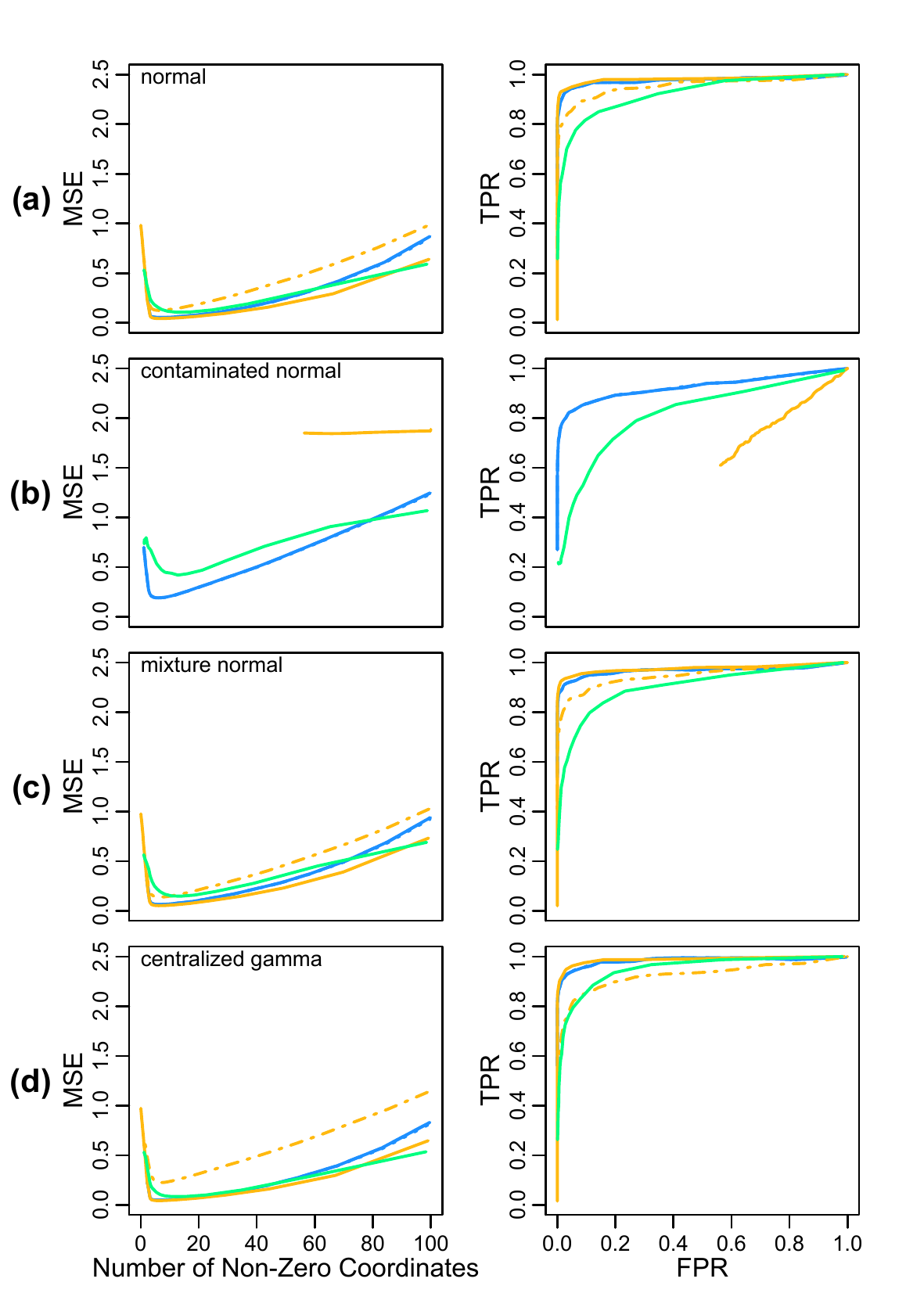}
\caption{Results for lasso under scenarios 1 and 2, CENet under both scenarios, and MRC under both scenarios, averaged over 100 replications, for $n = 200, p = 100, R^2 = 0.6$.  Left column: plot of average standardized squared estimation error against average number of non-zero coordinates of $\hbeta$; right column: plot of average ROC curves.  The rows ordered from top to bottom contain results for noise distributions (a)$-$(d).  Colored lines correspond to lasso under scenario 1 (\protect\includegraphics[height=0.25em]{o1.png}), lasso under scenario 2 (\protect\includegraphics[height=0.25em]{o4.png}), CENet with $\alpha_2 = 10^{-8}$ (\protect\includegraphics[height=0.25em]{b1.png}), $\alpha_2 = 10^{-7}$ (\protect\includegraphics[height=0.25em]{b2.png}), $\alpha_2 = 10^{-6}$ (\protect\includegraphics[height=0.25em]{b3.png}) under both scenarios, and MRC (\protect\includegraphics[height=0.25em]{g1.png}) under both scenarios.}
\label{Figure:Rsq0.6n200}
\end{figure}

\newpage
\subsection{Data Application}
\label{Sec:Data}

We consider the cardiomyopathy microarray data from a transgenic mouse model of dilated cardiomyopathy that is related to the understanding of human heart diseases \citep{Redfern2000}.  The dataset includes an $n\times p$ matrix of gene expression values $\X = (x_{ij})$, where $x_{ij}$ is the expression level of the $j^{th}$ gene for the $i^{th}$ mouse, for $1\leq i\leq n=30$ and $1\leq j \leq p=6319$.  Each mouse also give an outcome expression level $y_i$ of a G protein-coupled receptor, designated Ro1.

The cardiomyopathy microarray data was previously studied in \cite{Segal2003}, \cite{Hall2009}, and \cite{Li2012}.  The goal is to determine genes whose expression changes are due to the expression of Ro1.  However, such a selection is inherently difficult due to two distinguishing features of the microarray data: (i) $p$ vastly exceeds $n$, and (ii) the gene expression values are highly correlated.  \cite{Segal2003} gave an overview and evaluation of some proposals for tackling these issues, and obtained results that motivate regularized regression procedures such as the lasso and the least angle regression \citep{Efron2004} as alternate regression tools for microarray studies.  As noted in \cite{Segal2003}, a major challenge with the application of these high-dimensional regression methods is the determination of the right amount of regularization.  

In this study, we do not argue whether regularized regression procedures are the best tools for analyzing the cardiomyopathy microarray data.  Instead, we are interested in exploring whether there is a significant overlap between the genes selected by CENet and those that are identified in \cite{Segal2003}, \cite{Hall2009}, and \cite{Li2012}.  We address the problem of proper regularization with the stability selection of \cite{Buhlmann2010}.  We first standardize the expression values of each gene to have zero mean and unit variance.  We then repeat for $B=1000$ times:
\begin{enumerate}[(i)]
\item Sample~$n$~pairs of $(\x_i, y_i)$ with replacement from the original dataset.
\item Since $p\gg n$, we apply the robust rank correlation based screening procedure of \cite{Li2012} to the resampled dataset and retain the top 50 genes that have the highest Kendall's tau correlation (in magnitude) with the Ro1 expression values.  \cite{Li2012} showed that for the purpose of variable selection it is appropriate to perform variables screening before fitting a linear transformation model, as the screening step will eliminate irrelevant predictors and keep relevant predictors with high probability.  A similar screening strategy based on sample correlation has been widely used in practice when fitting a linear regression model, and a theoretical justification can be found in \cite{FanLv2008}.  
\item Apply CENet with $\alpha_2 = 10^{-8}$ over a grid of $\alpha_1\in(10^{-6}, 10^{-2})$, and record the genes selected (i.e., genes whose corresponding $\hat{\beta}_j\neq 0$) for each $\alpha_1$. 
\end{enumerate}
Finally, we compute $\hat{p}_j^{\alpha_1}$, the selection frequency of the $j^{th}$ gene across the B resampled datasets at tuning parameter value $\alpha_1$, for $j = 1, \ldots, p$ at each $\alpha_1$ value.   Figure~\ref{Figure:admmh} shows the stability path of each gene, which is the plot of $\hat{p}_j^{\alpha_1}$ against $\alpha_1$ for $j = 1, \ldots, p$.   There are five genes that stand out throughout their stability paths (colored in red): Msa.10108.0, Msa.10044.0, Msa.1009.0, Msa.1166.0, Msa.10274.0.  Besides for Msa.1166.0, none of the other four genes have been previously implicated for the cardiomyopathy dataset.  Msa.1166.0 was identified in \cite{Hall2009} and \cite{Li2012} for its strong nonlinear association with the Ro1 expression level.  A second group of genes stand out when $\alpha_1\approx 0$, and their stability paths are colored in blue: Msa.7019.0, Msa.2877.0, Msa.5794.0, Msa.15442.0.  We highlighted the stability paths of a third group of genes in green, as they also stand out, albeit less strongly, when $\alpha_1\approx 0$: Msa.1590.0, Msa.2134.0, Msa.3969.0.  The genes Msa.7019.0, Msa.2877.0, Msa.2134.0 were identified in \cite{Li2012} for their high Kendall's tau correlations with the Ro1 expression level.    

For comparison, we also compute the stability paths of lasso for the cardiomyopathy dataset.  We follow essentially the same steps as outlined above for CENet, except that we use Pearson correlation rather than Kendall's tau for variables screening, and we apply lasso over $\lambda\in (10^{-6}, 5)$.  From Figure~\ref{Figure:lasso}, we see that four genes stand out throughout their stability paths (colored in red): Msa.2877.0, Msa.964.0, Msa.778.0\_i, Msa.2134.0.  Besides for Msa.964.0, the other three genes have been previously identified by \cite{Segal2003} using lasso with cross-validation.  We also see that four other genes stand out when $\lambda\approx 0$ (colored in blue): Msa.1590.0, Msa.1043.0, Msa.3041.0, Msa.741.0.  Msa.1590.0 is also identified by CENet when $\alpha_1\approx 0$.

Surprisingly, we do not see an overlap between the top five genes identified by CENet and the top four genes identified by lasso, though it is known \citep{Li2012} that Msa.1166.0 identified by CENet is highly collinear with Msa.2877.0 identified by lasso.  Upon inspection, we find that the relative strengths of association between the expression level of each of these nine genes and that of Ro1 change when the measure of association is changed from Pearson correlation to Kendall's tau.  For instance, the two genes Msa.964.0 and Msa.778.0\_i stand out when the association is measured by Pearson correlation, but they do not stand out when the association is measured by Kendall's tau.  Figures~\ref{Figure:scatter1} and~\ref{Figure:scatter2} show the scatterplots of expression level of each of these nine genes with the Ro1 expression level.  In such a case, Pearson correlation measures the strength of linear relationship between the gene expression level and the Ro1 expression level.  For comparison, we also include the scatterplots of the ranked gene expression levels, which serve to illustrate the strengths of association as measured by Kendall's tau.  We see that the association is better captured by Kendall's tau for Msa.10274.0 and Msa.1166.0 (i.e., the scatterplots of ranked expression levels exhibit more linearity than their unranked counterparts), whereas the association is better captured by Pearson correlation for Msa.2134.0 and Msa.2877.0 (i.e., the scatterplots of original expression levels exhibit more linearity than their ranked counterparts).  

Due to the relatively small sample size ($n = 30$), it is hard to justify if there is a single best way to summarize the association between the expression level of Ro1 and that of other genes.  As was mentioned in \cite{Segal2003}, with the current state of microarray technology and study dimensions, different data processing and/or modeling approaches can give very different results.  This is not necessarily a bad thing as such results can be viewed from a ``sensitivity analysis" perspective, and ultimately, microarray results need to be validated experimentally by another complimentary method.  The interpretation of the biological relevance of the genes identified above is beyond the scope of the paper.  For completeness, we include their names and descriptions in Table~\ref{genenames}. 

\begin{figure}[htp]
\centering
\includegraphics[angle=0,width=3.5in]{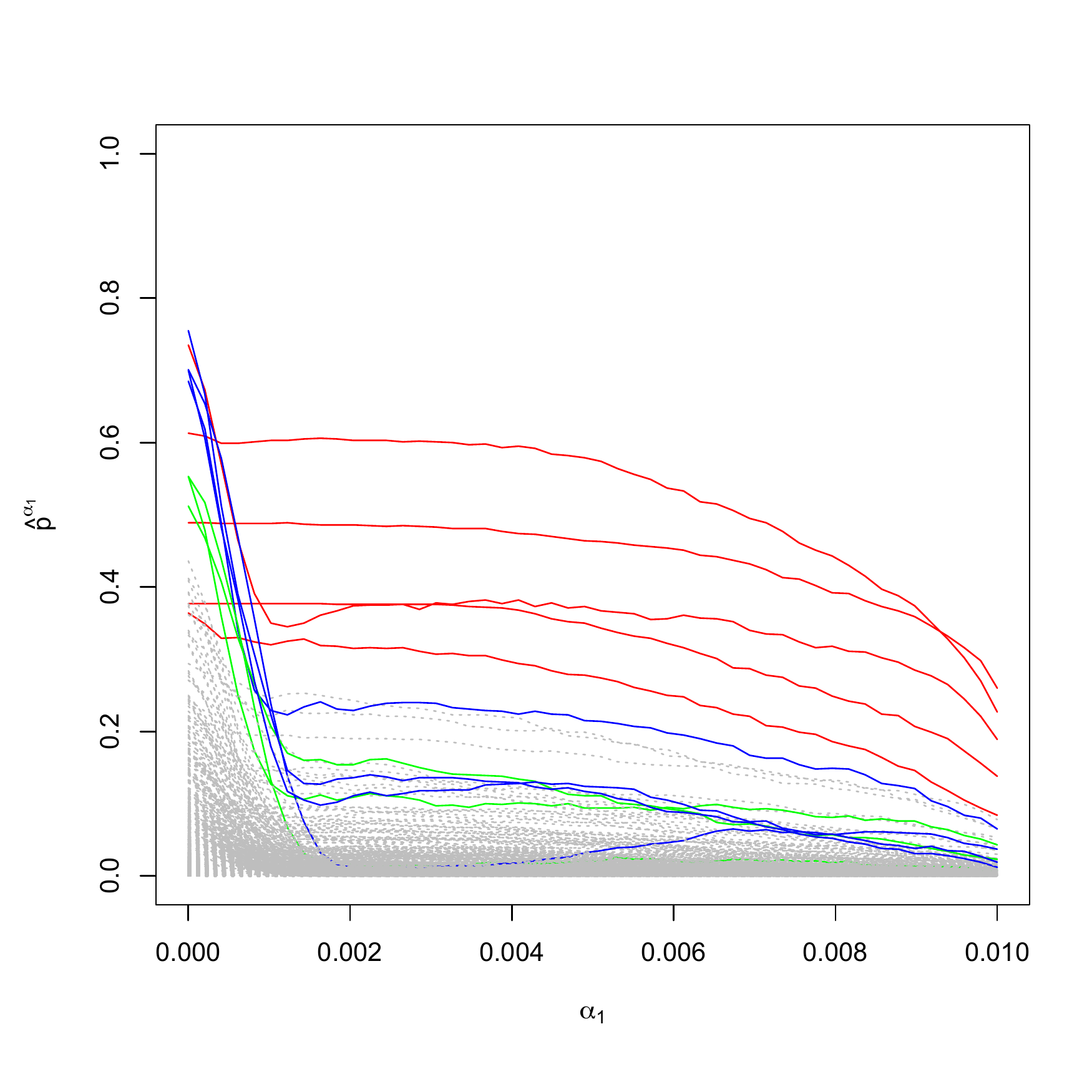}
\vspace{-2em}
\caption{Stability paths of CENet for the cardiomyopathy dataset.  The highlighted stability paths correspond to the genes Msa.10108.0, Msa.10044.0, Msa.1009.0, Msa.1166.0, Msa.10274.0 (\protect\includegraphics[height=0.25em]{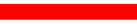}), Msa.7019.0, Msa.2877.0, Msa.5794.0, Msa.15442.0 (\protect\includegraphics[height=0.25em]{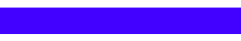}), and Msa.1590.0, Msa.2134.0, Msa.3969.0 (\protect\includegraphics[height=0.25em]{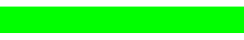}).}
\label{Figure:admmh}
\end{figure}

\begin{figure}
\centering
\includegraphics[angle=0,width=3.5in]{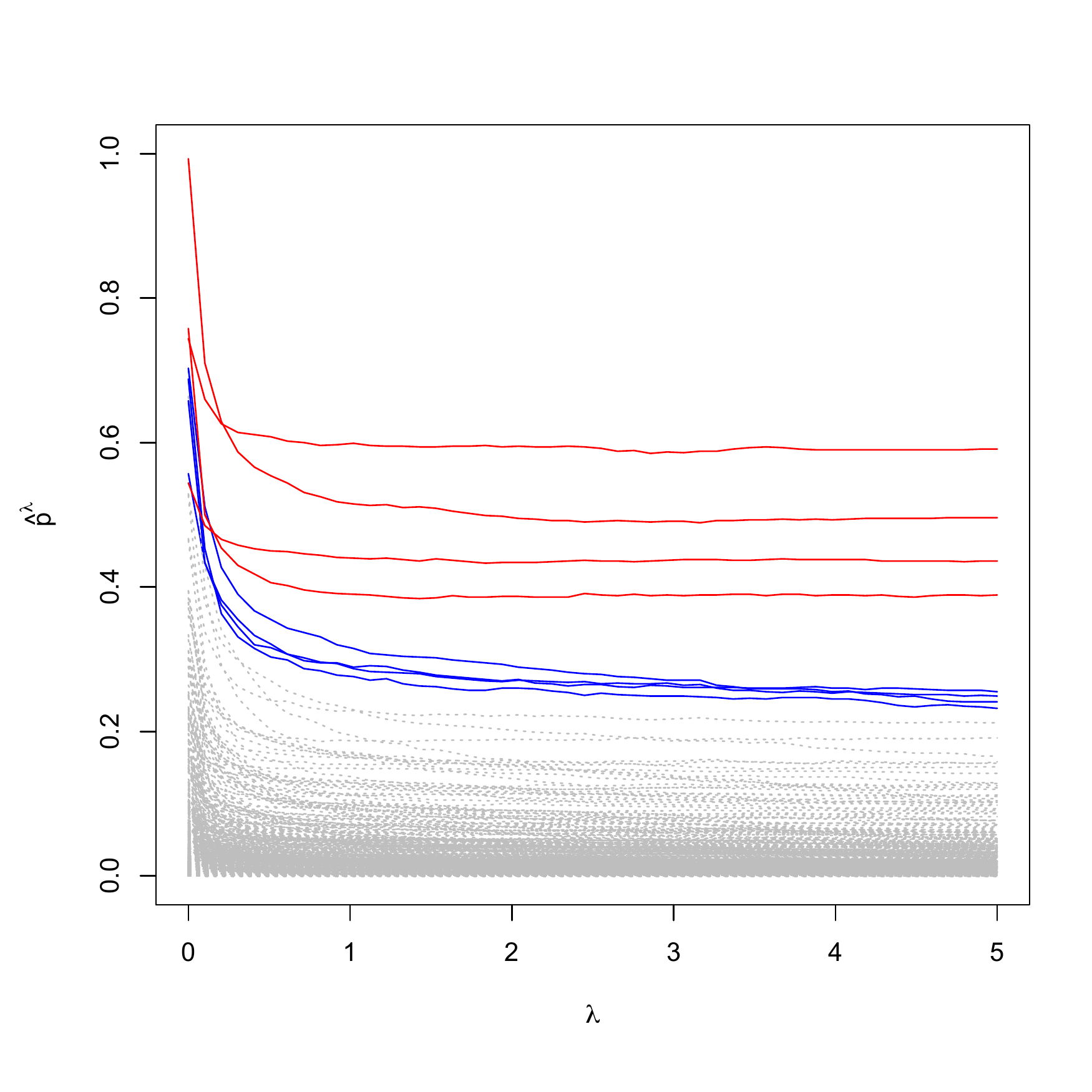}
\vspace{-2em}
\caption{Stability paths of lasso for the cardiomyopathy dataset.  The highlighted stability paths correspond to the genes Msa.2877.0, Msa.964.0, Msa.778.0\_i, Msa.2134.0 (\protect\includegraphics[height=0.25em]{red.png}), and Msa.1590.0, Msa.1043.0, Msa.3041.0, Msa.741.0 (\protect\includegraphics[height=0.25em]{blue2.png}).}
\label{Figure:lasso}
\end{figure}  

\begin{figure}[htp]
\centering
\includegraphics[angle=0,width=5in]{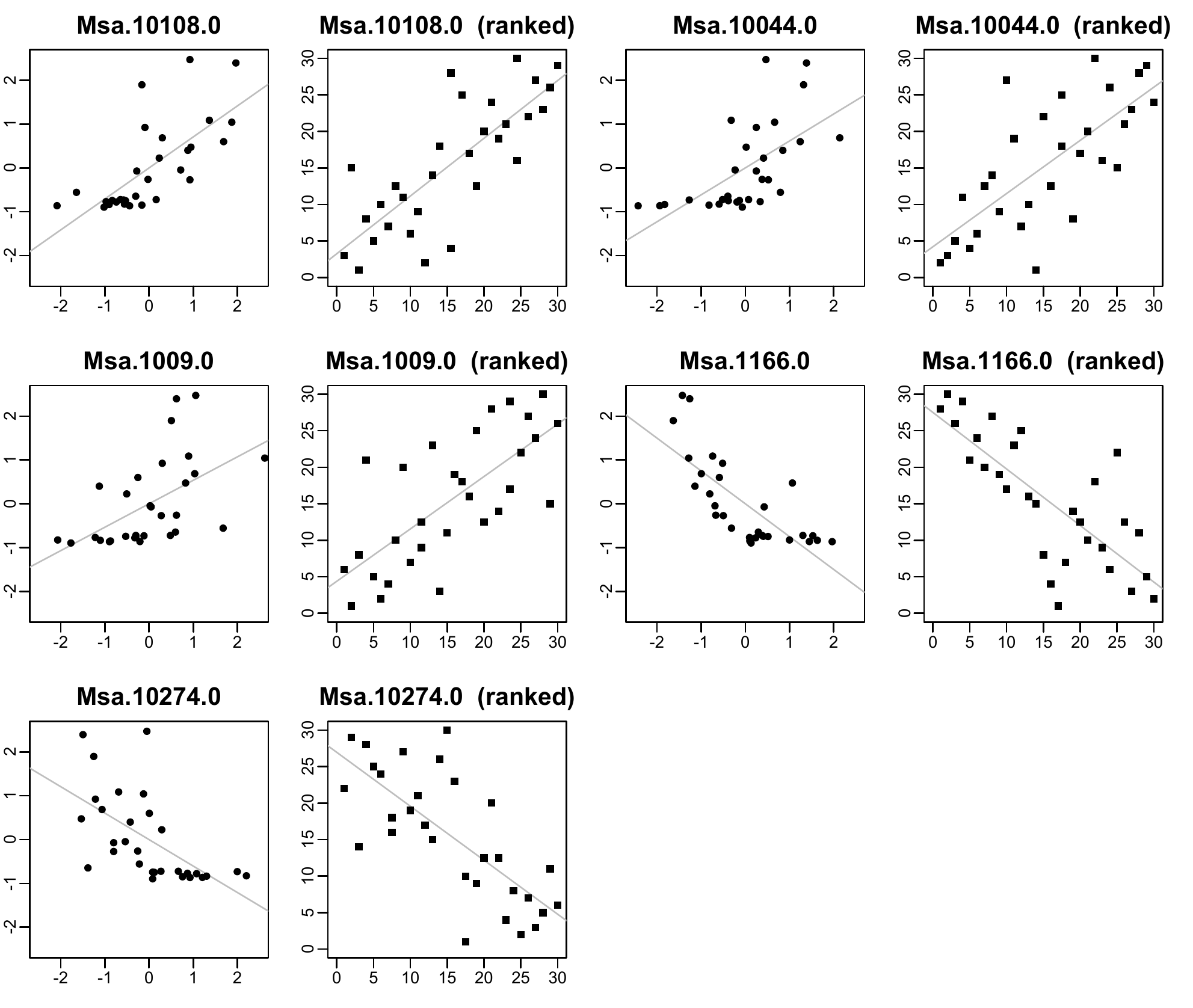}
\caption{Top five genes selected by CENet.  Odd columns: scatterplots of expression level of each gene with that of Ro1; even columns: scatterplots of ranked expression level of each gene with that of Ro1.  Grey lines in each panel indicate the least squares lines.}
\label{Figure:scatter1}
\end{figure}

\begin{figure}[htp]
\centering
\includegraphics[angle=0,width=5in]{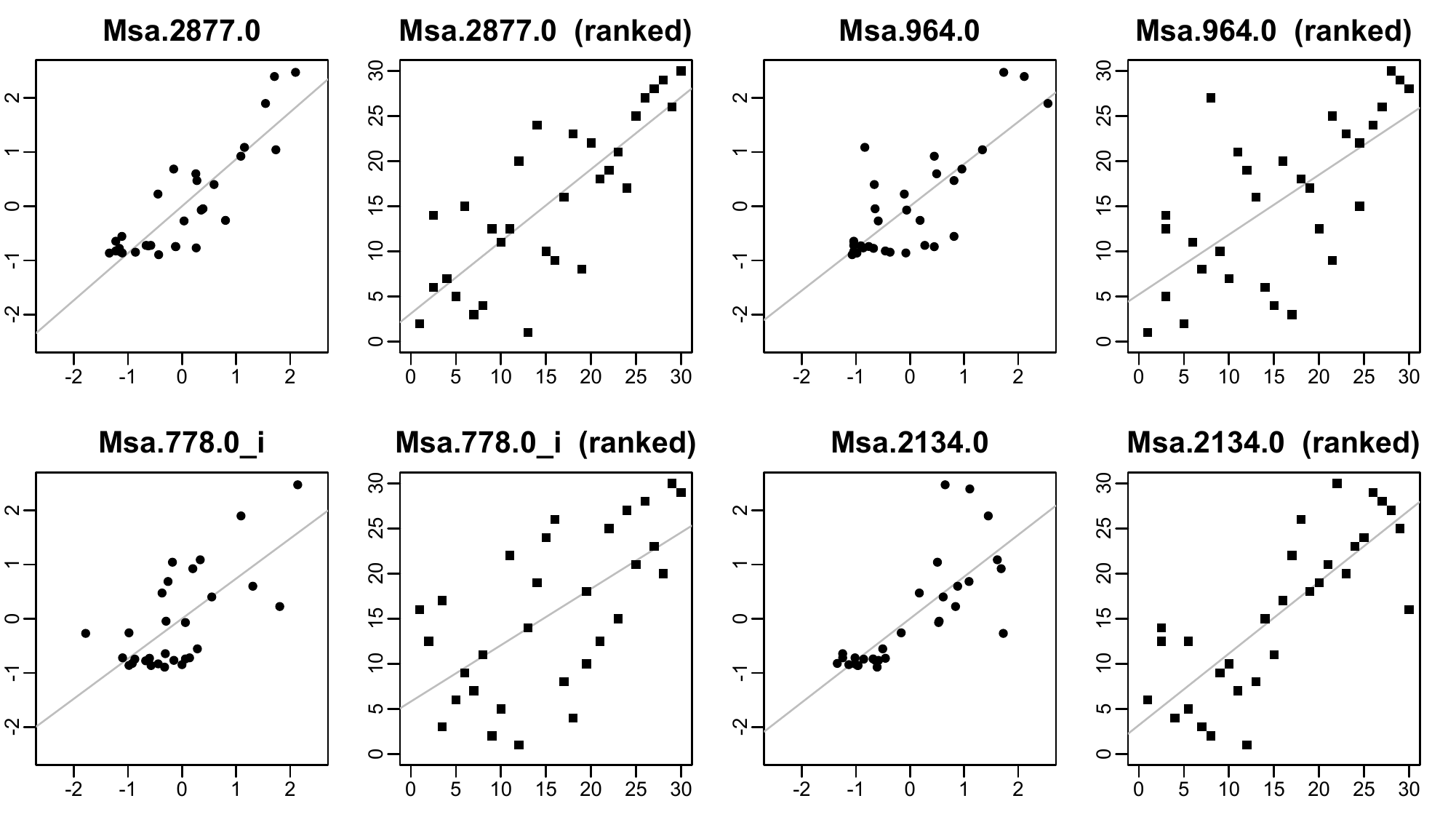}
\caption{Top four genes selected by lasso.  Odd columns: scatterplots of expression level of each gene with that of Ro1; even columns: scatterplots of ranked expression level of each gene with that of Ro1.  Grey lines in each panel indicate the least squares lines.}
\label{Figure:scatter2}
\end{figure}

\begin{table}[htp]
\centering
\begin{tabular}{|l|p{11cm}|}
\hline
Genes & Descriptions  \\ 
\hline
Msa.10108.0 & Homologous to sp P07814: MULTIFUNCTIONAL AMINOACYL-TRNA SYNTHETASE\\  
Msa.10044.0 & Homologous to sp P10658: PROBABLE PHOSPHOSERINE AMINOTRANSFERASE\\
Msa.1009.0 & Mouse myoblast D1 (MyoD1) mRNA, complete cds\\
Msa.1166.0 & Mus musculus sterol carrier protein-2 (SCP-2) gene, complete cds\\
Msa.10274.0 & Homologous to sp P07335: CREATINE KINASE, B CHAIN\\
Msa.7019.0 & Homologous to sp P10719: ATP SYNTHASE BETA CHAIN, MITOCHONDRIAL PRECURSOR\\
Msa.2877.0 & Mouse MARib mRNA for ribophorin, complete cds \\
Msa.5794.0 & Homologous to sp P11730: CALCIUM/CALMODULIN-DEPENDENT PROTEIN KINASE TYPE II GAMMA CHAIN (CAM- KINAS\\
Msa.15442.0 & Homologous to sp P07379: PHOSPHOENOLPYRUVATE CARBOXYKINASE, CYTOSOLIC (GTP)\\
Msa.1590.0 & Mouse protein kinase C delta mRNA, complete cds\\
Msa.2134.0 & Murine mRNA for 4F2 antigen heavy chain\\
Msa.3969.0 & Homologous to sp P36542: ATP SYNTHASE GAMMA CHAIN, MITOCHONDRIAL PRECURSOR\\
\hline
\end{tabular}
\caption{Names and descriptions of genes identified in the CENet stability paths.}
\label{genenames}
\end{table}     

\newpage
\section{Discussion}
\label{Sec:Discussion}

Nonlinear transformation of variables is commonly practiced in regression analysis.  In the mainstream statistical literature, variables transformation methods have been introduced by way of an estimation problem based on a model with parameters and noise terms \citep{BoxCox1964, Carroll1984}.  The CENet method proposed in this paper and many other methods referenced in Section~\ref{Sec:Intro} fall into this category.  Another line of work focuses on dimension reduction rather than approximation of regression surfaces, and an early important work is the sliced inverse regression (SIR) method of \cite{DuanLi1991}.  Yet a different approach towards nonlinear regression analysis exists, where one searches for optimal nonlinear transformation of variables through optimizing some objective function of interest, with or without a model in mind.  This includes, in particular, the alternating conditional expectation (ACE) algorithm proposed by \cite{Breiman1985}.  Although the ACE algorithm is a powerful and useful tool, it is more of a nonlinear multivariate analysis method than a regression method \citep{Buja1990}, and is known to be connected to the additive principal components (APC) method of \cite{DonnellBuja1994}.  Despite the fact that CENet is developed with model~\eqref{Eq:ModelTrans} in mind, interestingly we are able to find a connection between CENet and SIR, ACE, APC.  Such a connection stems primarily from the fact that a correlation-based approach is used for all these methods, and that we take a random-$\x$ view when motivating the CENet methodology.

\subsection{Connections between CENet and SIR, ACE, APC} 
\label{SubSec:SIR} 

The inquiry of connections between CENet and SIR, ACE, APC is motivated by the key observation in Proposition~\ref{Prop:OurRep} that is used to formulate CENet.  Setting aside the fact that relation~\eqref{Eq:rho} (whose validity requires distributional assumption on $(\x, \epsilon)$) is used to bypass the unknown nonlinear transformation $h^*$ in model~\eqref{Eq:ModelTrans}, on a population level CENet is built on the simple idea of performing CCA on $\x$ and $h^*(y)$, allowing linear transformation on each.  Such an observation naturally links CENet to SIR, ACE, and APC, three methods that we will review in the next few paragraphs.

In their pioneering article on dimension reduction, \cite{DuanLi1991} considered a semiparametric index model of the form
\begin{equation}
\label{Eq:SIRmodel}
y = f(\x^T\bbeta_1^*, \ldots, \x^T\bbeta_k^*, \epsilon),
\end{equation}
where $f$ is an unknown link function and $\epsilon$ is a noise term independent of $\x$, and proposed SIR for estimating the effective dimension reduction (e.d.r.) space, which is the linear space generated by the $\bbeta_i^*$'s.  Rather than modeling the conditional distribution of $y$ given $\x$ as in conventional approaches, \cite{DuanLi1991} took an inverse modeling perspective by treating $\x$ as random and considering the conditional distribution of $\x$ given $y$.  Indeed, when $k=1$, the population version of SIR amounts to solving the following generalized eigenproblem:
\begin{equation}
\label{Eq:SIR}
\max_{\bbeta\in\R^p}\frac{\bbeta^T\Cov[E(\x|y)]\bbeta}{\bbeta^T\bSigma_{xx}\bbeta}.
\end{equation}
\cite{ChenLi1998} later showed that there is a natural connection between SIR and multiple linear regression, in that the SIR direction obtained from \eqref{Eq:SIR} can be interpreted as the slope vector of multiple linear regression applied to an optimally transformed $y$.  That is, the solution to \eqref{Eq:SIR} also solves
\begin{equation}
\label{Eq:SIRCCA}
\max_{h\in L^2(y), \bbeta\in\R^p} \Corr(h(y), \x^T\bbeta),
\end{equation}
where the maximization is taken over all transformations $h$ with $E(h^2(y))<\infty$ and vectors $\bbeta\in\R^p$.  Regarding the statistical theory of SIR, \cite{DuanLi1991} showed that the SIR directions fall into the e.d.r. space under a linearity assumption on the distribution of $\x$, which is satisfied when $\x$ follows an elliptical distribution.  Observing that model~\eqref{Eq:ModelTrans} is a special case of model \eqref{Eq:SIRmodel} for which SIR is applicable, that both SIR and CENet require some sort of (elliptical) distributional assumption on $(\x, \epsilon)$ for recovering $\bbeta^*$, and also comparing \eqref{Eq:Corrstar} and \eqref{Eq:SIRCCA}, it is natural to expect that a connection exists between SIR and CENet. 

The characterization \eqref{Eq:SIRCCA} of SIR in terms of maximum correlation connects it to ACE \citep{Breiman1985} and APC \citep{DonnellBuja1994, Tan2015}, two nonlinear multivariate techniques that make use of maximum correlation for searching optimal variable transformations.  Given a response $y$ and predictors $x_1, \ldots, x_p$, \cite{Breiman1985} considered a regression model of the form $h^*(y) = \sum_{j=1}^p g_j^*(x_j) + \epsilon$,
and defined the optimal ACE transformations as functions $h^{**}, g_1^{**}, \ldots, g_p^{**}$ that solve
\[
\min_{h\in L^2(y), g_1\in L^2(x_1), \ldots, g_p\in L^2(x_p)} \frac{\Var(h(y) - \sum_{j=1}^pg_j(x_j))}{\Var(h(y))}. 
\]
Whereas each predictor is allowed to make only linear transformation in SIR, each is allowed to make possibly nonlinear transformation in ACE.  When $g_j^*(x_j), j = 1, \ldots, p$ have a joint normal distribution and $\epsilon$ is an independent $\cN(0, \sigma_\epsilon^2)$, $h^{**} = h^*$ and $g_j^{**} = g_j^*$, for $j = 1, \ldots, p$ \citep[p. 581]{Breiman1985}.  

On the other hand, APC treats all variables symmetrically by fitting an implicit additive equation $\sum_{j=1}^p g_j(x_j) = 0$,
as opposed to regression analysis which treats variables asymmetrically by singling out a response variable.  The (smallest) APC transformations can be found by solving
\[
\min_{g_1\in L^2(x_1), \ldots, g_p\in L^2(x_p)} \frac{\Var(\sum_{j=1}^pg_j(x_j))}{\sum_{j=1}^p\Var(g_j(x_j))}.
\]
In the case where $p=2$, the smallest and the largest APCs (obtained by maximizing instead of minimizing the criterion above) are in trivial correspondence since a largest APC given by $(g_1^{**}, g_2^{**})$ generates a smallest APC given by $(g_1^{**}, -g_2^{**})$.  

In general, APC and ACE analyses are not identical, but a direct correspondence exists in one-simple situation: single-predictor ACE is equivalent to two-variable APC \citep[Section 4.6]{DonnellBuja1994}.  Such an equivalence enables us to link SIR to APC, by treating $\x$ and $y$ as the two variables of interest, and considering linear transformation for $\x$ and possibly nonlinear transformation for $y$.  The following lemma summarizes the connection among SIR, ACE, and APC.

\begin{Proposition}
\label{Prop:ConnectACEAPC}
SIR, ACE, and APC are equivalent in that
\begin{align}
\argmax_{h\in L^2(y), \bbeta\in\R^p} \Corr(h(y), \x^T\bbeta) 
&= \argmin_{h\in L^2(y), \bbeta\in\R^p} \frac{\Var(h(y) - \x^T\bbeta)}{\Var(h(y))} \nonumber\\
&= \argmax_{h\in L^2(y), \bbeta\in\R^p} \frac{\Var(h(y) + \x^T\bbeta)}{\Var(h(y)) + \Var(\x^T\bbeta)}. \label{Eq:ACEAPC}
\end{align}
\end{Proposition}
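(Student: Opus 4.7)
The plan is to reduce all three optimization problems to the same correlation maximization by direct algebraic manipulation, exploiting the scale invariance of each objective. I would introduce the shorthand $V_h = \Var(h(y))$, $V_\bbeta = \Var(\x^T\bbeta)$, and $C = \Cov(h(y), \x^T\bbeta)$, so that each objective becomes a simple rational function of these three scalars together with $\bSigma_{xx}$. Throughout I tacitly exclude the degenerate cases $V_h = 0$ and $V_\bbeta = 0$, which are not optimal for any of the three problems.

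For the first equality I would start from the identity
\[
\frac{\Var(h(y) - \x^T\bbeta)}{\Var(h(y))} = 1 + \frac{V_\bbeta - 2C}{V_h}.
\]
For each fixed $h$ the right-hand side is a convex quadratic in $\bbeta$, minimized at the population least-squares solution $\bbeta^*(h) = \bSigma_{xx}^{-1}\Cov(\x, h(y))$, with minimum value $1 - R^2(h)$, where $R^2(h) = \Cov(\x, h(y))^T \bSigma_{xx}^{-1} \Cov(\x, h(y))/V_h$ is the squared multiple correlation of $h(y)$ on $\x$. Since $R^2(h) = \max_\bbeta \Corr(h(y), \x^T\bbeta)^2$, minimizing the ACE ratio over $h$ is equivalent to maximizing $R(h)$, and hence to jointly maximizing $\Corr(h(y), \x^T\bbeta)$ over $(h, \bbeta)$. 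A direct Cauchy--Schwarz argument shows that the correlation-maximizing direction for that $h$ is also a positive multiple of $\bbeta^*(h)$, so the argmin and argmax sets coincide up to the additional scaling freedom of the correlation objective.

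For the second equality I would expand
\[
\frac{\Var(h(y) + \x^T\bbeta)}{\Var(h(y)) + \Var(\x^T\bbeta)} = 1 + \frac{2C}{V_h + V_\bbeta}.
\]
Since flipping the sign of $\bbeta$ strictly increases this ratio whenever $C < 0$, I would restrict attention to $C > 0$. The AM--GM inequality $V_h + V_\bbeta \geq 2\sqrt{V_h V_\bbeta}$ then yields
\[
\frac{C}{V_h + V_\bbeta} \;\leq\; \frac{C}{2\sqrt{V_h V_\bbeta}} \;=\; \frac{1}{2}\Corr(h(y), \x^T\bbeta),
\]
with equality iff $V_h = V_\bbeta$. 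The right-hand side depends only on the directions of $h$ and $\bbeta$, and is maximized precisely on the correlation-optimal set; hence any correlation maximizer becomes an APC maximizer after rescaling $\bbeta$ so that $V_\bbeta = V_h$, and conversely any APC maximizer automatically satisfies $V_h = V_\bbeta$ and achieves the correlation maximum.

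The main obstacle---really a matter of careful bookkeeping---is handling the three distinct scale invariances present in the statement: the correlation objective is invariant under independent positive rescaling of $h$ and of $\bbeta$, whereas the ACE and APC objectives are invariant only under the joint rescaling $(h, \bbeta) \mapsto (ah, a\bbeta)$ for any scalar $a \neq 0$. Equation~\eqref{Eq:ACEAPC} should therefore be read as an equality of sets of optimizing pairs modulo these invariances; I would make this precise by specifying canonical normalizations---$V_h = 1$ for the first equality and $V_h = V_\bbeta$ for the second---under which the argmax/argmin sets coincide on the nose.
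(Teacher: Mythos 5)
Your proposal is correct and follows essentially the same route as the paper: for the ACE equality you profile out $\bbeta$ (the paper profiles out only its scale $c$ after normalizing, arriving at the same $1-\Corr^2$ reduction), and for the APC equality your AM--GM step $V_h + V_\bbeta \geq 2\sqrt{V_h V_\bbeta}$ with equality iff $V_h = V_\bbeta$ is exactly the paper's $\cos(\theta)\sin(\theta) = \tfrac{1}{2}\sin(2\theta) \leq \tfrac{1}{2}$ argument in different clothing. Your explicit discussion of the three scale invariances and of reading \eqref{Eq:ACEAPC} as an equality of optimizer sets modulo normalization is a welcome clarification that the paper leaves implicit.
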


\begin{Remark}
In \eqref{Eq:ACEAPC} we replace the $p$ infinite-dimensional Hilbert spaces $L^2(x_1), \ldots, L^2(x_p)$ in ACE by a single $p$-dimensional Hilbert space $\Hsp_\x = \{\x^T\bbeta: \bbeta\in\R^p\}$ and keep $L^2(y) = \Hsp_y := \{h(y): \Var(h(y))<\infty\}$ the same.  On the other hand, in APC we take $p=2$ for the two variables $\x$ and $y$ and consider optimizing over $g_1(\x)\in\Hsp_\x$ and $g_2(y)\in\Hsp_y$.  In essence, one can view SIR as performing CCA between the Hilbert spaces $\Hsp_\x$ and $\Hsp_y$, and all three criteria in \eqref{Eq:ACEAPC} are equivalent characterization of the CCA problem.  
\end{Remark} 

Now that we have established the equivalence of SIR, ACE, and APC (without referring to any true underlying models), we are ready to explore their connection with CENet.  For brevity, in the rest of the section we shall refer to the solution $(h, \bbeta)$ to \eqref{Eq:ACEAPC} as the SIR transformation and the SIR direction, respectively.  Comparing \eqref{Eq:SIRCCA} with \eqref{Eq:Corrstar}, we see that to draw a connection between SIR and CENet, it suffices to find conditions for which the SIR transformation is $h^*$ (up to intercept and scale) when model~\eqref{Eq:ModelTrans} holds.  The following proposition gives two such conditions:

\begin{Proposition}
\label{Prop:ConnectSIR}
Under model~\eqref{Eq:ModelTrans}, suppose that $h^*$ is invertible, $\epsilon$ has mean 0 and variance $\sigma_\epsilon^2$, and that one of the following holds:
\begin{enumerate}[(i)] \itemsep -0.3em
\item $(\x, \epsilon)$ follows a joint elliptical distribution with $\Cov(\x, \epsilon) = \0$;
\item $\epsilon$ is degenerate, i.e., $\sigma_\epsilon^2 = 0$ (no distributional assumption on~$\x$).
\end{enumerate}
Then the solutions to \eqref{Eq:ACEAPC} recover $h^*$ and $\bbeta^*$ in model~\eqref{Eq:ModelTrans}.  In other words, the SIR direction $\bbeta$ obtained from \eqref{Eq:ACEAPC} also solves \eqref{Eq:Corrstar}. 
\end{Proposition}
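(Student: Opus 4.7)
The plan is to exploit Proposition~\ref{Prop:ConnectACEAPC} and work with the canonical-correlation formulation: find $(h,\bbeta)$ maximizing $\Corr(h(y),\x^T\bbeta)$ over $h\in L^2(y)$ and $\bbeta\in\R^p$. Using invertibility of $h^*$, I reparametrize: let $z = h^*(y) = \x^T\bbeta^*+\epsilon$ and $\tilde h = h\circ(h^*)^{-1}$, so $h(y) = \tilde h(z)$. Correlations are invariant under affine shifts, so I assume $E(\x)=\0$, $E(\epsilon)=0$, and (by centering) $E(\tilde h(z))=0$. For fixed $\tilde h$, the inner maximum over $\bbeta$ of $\Corr^2(\tilde h(z),\x^T\bbeta)$ is the multiple correlation coefficient
\[
R(\tilde h) \;=\; \frac{\Cov(\x,\tilde h(z))^T\,\bSigma_{xx}^{-1}\,\Cov(\x,\tilde h(z))}{\Var(\tilde h(z))},
\]
attained at $\bbeta\propto\bSigma_{xx}^{-1}\Cov(\x,\tilde h(z))$. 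The task then reduces to showing that $R(\tilde h)$ is maximized by $\tilde h$ equal to the identity (equivalently, $h$ an affine transformation of $h^*$) and that the associated optimal direction is proportional to $\bbeta^*$.

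In case (i), the joint ellipticality of $(\x,\epsilon)$ is preserved under the linear map $(\x,\epsilon)\mapsto(\x,\,\x^T\bbeta^*+\epsilon) = (\x,z)$, so $(\x,z)$ is jointly elliptical. The classical linearity property of elliptical distributions then gives $E(\x\mid z) = \Cov(\x,z)\,z/\Var(z)$. Iterated expectation yields
\[
\Cov(\x,\tilde h(z)) \;=\; E\bigl[\tilde h(z)\,E(\x\mid z)\bigr] \;=\; \frac{\Cov(z,\tilde h(z))}{\Var(z)}\,\Cov(\x,z),
\]
so substituting into the definition of $R$ produces the clean factorization $R(\tilde h) = \Corr^2(z,\tilde h(z))\cdot K$, where the constant $K = \Cov(\x,z)^T\bSigma_{xx}^{-1}\Cov(\x,z)/\Var(z)$ does not depend on $\tilde h$. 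By Cauchy--Schwarz, $\Corr^2(z,\tilde h(z))\leq 1$ with equality iff $\tilde h$ is affine in $z$; hence every maximizer has $h = a\,h^* + b$ for some $a>0$. At such a maximizer the optimal direction is
\[
\bbeta \;\propto\; \bSigma_{xx}^{-1}\Cov(\x,z) \;=\; \bSigma_{xx}^{-1}\bigl(\bSigma_{xx}\bbeta^* + \Cov(\x,\epsilon)\bigr) \;=\; \bbeta^*,
\]
where the last equality uses $\Cov(\x,\epsilon)=\0$. Thus the SIR direction coincides with the solution of~\eqref{Eq:Corrstar}.

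Case (ii) is essentially immediate: $\sigma_\epsilon^2 = 0$ forces $z = \x^T\bbeta^*$ almost surely, so taking $(h,\bbeta) = (h^*,\bbeta^*)$ already yields $\Corr(h(y),\x^T\bbeta) = \Corr(\x^T\bbeta^*,\x^T\bbeta^*) = 1$, the trivial upper bound on any correlation. The main obstacle is invoking the linearity-of-conditional-expectation identity $E(\x\mid z) = \Cov(\x,z)\,z/\Var(z)$ used in case (i); this is a standard property of the elliptical family, and the hypothesis $\Cov(\x,\epsilon)=\0$ ensures the block covariance structure needed both to justify this identity and to evaluate $\Cov(\x,z) = \bSigma_{xx}\bbeta^*$ cleanly. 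Once that identity is in hand, the remainder of the argument reduces to iterated expectation and Cauchy--Schwarz.
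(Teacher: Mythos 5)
Your proof is correct, and it takes a genuinely different route from the paper's, although both ultimately hinge on the same key fact: under joint ellipticality of $(\x,\epsilon)$, the conditional expectation of $\x$ given $z=h^*(y)$ is linear in $z$ with slope vector proportional to $\bSigma_{xx}\bbeta^*$. The paper profiles out the transformation first: it invokes the equivalence between the maximum-correlation formulation \eqref{Eq:SIRCCA} and the generalized eigenproblem \eqref{Eq:SIR}, computes $E(\x\mid y)=E(\x)+\bSigma_{xx}\bbeta^*\,g(y)$ explicitly, observes that $\Cov[E(\x\mid y)]$ is rank one, deduces $\bbeta\propto\bbeta^*$ from the Cauchy inequality, and only then recovers the optimal transformation via $h(y)\propto E(\bbeta^{*T}\x\mid y)\propto h^*(y)$. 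You profile out $\bbeta$ first: for fixed $h$ the inner maximization gives the multiple correlation coefficient $R(\tilde h)$, the elliptical linearity identity factorizes $R(\tilde h)=\Corr^2(z,\tilde h(z))\cdot K$ with $K$ independent of $\tilde h$, and Cauchy--Schwarz pins down $\tilde h$ as affine, after which the optimal direction $\bSigma_{xx}^{-1}\Cov(\x,z)=\bbeta^*$ falls out. Your route is somewhat more self-contained, since it works directly with the correlation criterion and does not need the SIR/CCA equivalence of \cite{ChenLi1998} as an intermediate step; the paper's route makes the rank-one inverse-regression structure explicit, which is the more standard SIR-flavored argument. Two cosmetic remarks: since you maximize $\Corr^2$ rather than $\Corr$, the sign of the affine coefficient (and of $\bbeta$) is only determined up to a simultaneous flip, which is harmless given that \eqref{Eq:Corrstar} identifies $\bbeta^*$ only up to positive scale; and in case (ii) your argument shows $(h^*,\bbeta^*)$ attains the maximal correlation of one, which is all the proposition asserts and matches what the paper's own case (ii) establishes.
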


Condition (i) includes the special case where $\x$ has a multivariate normal distribution and $\epsilon\sim\cN(0, \sigma_\epsilon^2)$ independent of $\x$.  Under condition (i), both the CENet direction and the SIR direction are proportional to $\bbeta^*$.  However, the validity of CENet relies on the relationship between the orthant probability $P(x_j>0, \x^T\bbeta^*+\epsilon>0)$ and $\rho_j$, whereas the validity of SIR relies on the linearity of the conditional expectation $E(\x|\x^T\bbeta^*+\epsilon)$.  Interestingly, when $h^*$ is invertible, $\x$ has a joint elliptical distribution, and $\epsilon$ is independent of $\x$, then the SIR direction is always proportional to $\bbeta^*$ regardless of the distribution of $\epsilon$ \citep{DuanLi1991}.  However, in such a case the SIR transformation is not necessarily given by $h^*$.  On the other hand, the CENet direction will be affected by the distribution of $\epsilon$ even if $\epsilon$ is independent of $\x$, but the relation \eqref{Eq:rho} (and hence, CENet) is valid under condition (i) even when $\Cov(\x, \epsilon) \neq \0$.  Although SIR is applicable to the more general model \eqref{Eq:SIRmodel} and requires less restrictive assumption on the noise distribution than CENet, as mentioned in the introduction its rate of convergence in the high dimensions has not been derived, whereas CENet is shown to attain the optimal rate of convergence under a pair-elliptical assumption on $(x_j, h^*(y))$, $j = 1, \ldots, p$.


Under condition (ii), SIR recovers $\bbeta^*$ in model~\eqref{Eq:ModelTrans} regardless of the distribution of $\x$.  This is not surprising given the equivalence of SIR with \eqref{Eq:SIRCCA}.  Unfortunately, for CENet, an additional assumption like ellipticality of the distribution of $\x$ is required for recovering $\bbeta^*$ even if $\epsilon$ is degenerate.  This is because such an assumption is critical for us to bypass the estimation of $h^*$ in model~\eqref{Eq:ModelTrans}, and the convenience comes at a price.   On the contrary, the explicit need to estimate $h^*$ in \eqref{Eq:ACEAPC} grants greater flexibility on the distribution of $\x$ for SIR/ACE/APC, but it also renders their implementations more cumbersome and their theoretical analyses more challenging.  Even if one considers computing the SIR direction from \eqref{Eq:SIR} where estimation of $h^*$ is not needed, the slicing step \citep{DuanLi1991} involved in estimating $\Cov[E(\x|y)]$ can still affect the finite-sample performance of SIR when $p$ is large relative to $n$ (even if $p<n$) due to data sparseness.

To summarize, we have the following corollary:
\begin{Corollary}
\label{Cor:Connect}
Under model~\eqref{Eq:ModelTrans}, suppose that $h^*$ is invertible, $\epsilon$ has mean 0 and variance $\sigma_\epsilon^2$, and condition (i) in Proposition \ref{Prop:ConnectSIR} holds.
Then SIR, ACE, APC, and CENet are equivalent. 
\end{Corollary}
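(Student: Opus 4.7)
The plan is to chain together Propositions~\ref{Prop:ConnectACEAPC} and~\ref{Prop:ConnectSIR} with Proposition~\ref{Prop:OurRep} so that all four methods can be seen to target the same population optimum. The entire argument is really one of ``bookkeeping'' across the propositions already established in the discussion section.

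First, I would invoke Proposition~\ref{Prop:ConnectACEAPC}, which reduces SIR, ACE, and APC to a single common optimization problem, namely the three equivalent characterizations collected in~\eqref{Eq:ACEAPC}. So for the SIR/ACE/APC half of the statement it suffices to work with just one of those three criteria, and I would choose the correlation form $\max_{h\in L^2(y),\bbeta\in\R^p}\Corr(h(y),\x^T\bbeta)$ since it most closely resembles the CENet population target in~\eqref{Eq:Corrstar}.

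Second, the hypotheses of the corollary ($h^*$ invertible, $E(\epsilon)=0$, $\Var(\epsilon)=\sigma_\epsilon^2$, and $(\x,\epsilon)$ jointly elliptical with $\Cov(\x,\epsilon)=\0$) are precisely condition~(i) of Proposition~\ref{Prop:ConnectSIR}. Applying that proposition immediately yields that every optimizer of~\eqref{Eq:ACEAPC} takes the form $(h,\bbeta)=(h^*,\bbeta^*)$ up to the intercept and scale indeterminacy intrinsic to the correlation criterion. Hence any SIR/ACE/APC solution recovers $\bbeta^*$ as its slope direction and $h^*$ as its transformation.

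Third, on the CENet side I would invoke Proposition~\ref{Prop:OurRep}, which states directly that $\bbeta^*$ solves the population problem~\eqref{Eq:Corrstar}, and Lemma~\ref{Lemma:PopRep}, which recasts~\eqref{Eq:Corrstar} as the constrained CCA problem~\eqref{Eq:CCAmax} that~\eqref{Eq:Opt} is the sample analogue of. Together with the identifiability normalization $\bbeta^{*T}\bSigma_{xx}\bbeta^*=1$, this pins down $\bbeta^*$ as the (unique up to sign) population target of CENet. Combining this with the second step, SIR/ACE/APC and CENet share the same population slope direction and the same transformation, which is what the corollary claims. I do not anticipate a serious technical obstacle; the only item requiring explicit attention is scale bookkeeping, since~\eqref{Eq:Corrstar} is scale-invariant in~$\bbeta$ while~\eqref{Eq:ACEAPC} is jointly scale-invariant in $(h,\bbeta)$, so ``equivalence'' must be read modulo positive rescaling, consistent with the identifiability convention used throughout the paper.
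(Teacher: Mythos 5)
Your proposal is correct and follows essentially the same route as the paper, which presents Corollary~\ref{Cor:Connect} as a direct summary of the preceding discussion: Proposition~\ref{Prop:ConnectACEAPC} collapses SIR, ACE, and APC into the single criterion \eqref{Eq:ACEAPC}, Proposition~\ref{Prop:ConnectSIR} shows that under condition (i) its solution is $(h^*,\bbeta^*)$ up to intercept and scale, and Proposition~\ref{Prop:OurRep} (with Lemma~\ref{Lemma:PopRep}) identifies $\bbeta^*$ as the CENet population target. The one point worth making explicit is that condition (i) is also what the CENet half of the equivalence needs: the joint ellipticity of $(\x,\epsilon)$ guarantees the pairwise ellipticity of each $(x_j,h^*(y))$ and hence the validity of relation \eqref{Eq:rho}, without which the Kendall's-tau-based population target underlying \eqref{Eq:Opt} would not coincide with the CCA target $\bSigma_{xy}$ in \eqref{Eq:CCAmax}.
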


\begin{Remark}
Under conditions given in Corollary~\ref{Cor:Connect}, unlike SIR, ACE, and APC, CENet recovers $\bbeta^*$ without requiring knowledge of $h^*$.
\end{Remark}

\subsection{Concluding Remarks}
We have presented a new approach, CENet, for efficiently performing simultaneous variables selection and slope vector estimation in the linear transformation model when the dimension $p$ can be much larger than the sample size $n$.  CENet is the solution of a convex optimization problem, for which a computationally efficient algorithm with guaranteed convergence to the global optimum is provided.  CENet has two tuning parameters that control the level of regularization.  We showed that under certain regularity conditions, CENet attains the optimal rate of convergence for the estimation of slope vector in the sparse linear transformational models, just as the best regression method do in the sparse linear models.  On the other hand, CENet does not require the specification of the transformation function and the noise distribution, and so is considerably more robust compared to existing estimation and variable selection methods for linear models in the large-$p$-small-$n$ setting.  The robustness of CENet to both nonlinearity in the response and heavy-tailedness in the noise distribution is confirmed by the theoretical results and simulation studies given in the paper.  We also showed that a connection exists between CENet and three other nonlinear regression/multivariate methods: SIR, ACE, APC, when model \eqref{Eq:ModelTrans} is valid and an elliptical assumption on $(\x, \epsilon)$ holds.  In such a case, an advantage of CENet over SIR, ACE, and APC is that it bypasses the need to explicitly estimate the unknown transformation function when the primary interest is in the estimation of the slope vector.

The rate of convergence of the CENet estimator in the paper is established under an elliptical assumption on the predictor-transformed-response pairs.  As demonstrated in our simulation results, CENet performs well when the noise term is generated independently of the predictor but not necessarily from the same family of distribution.  This motivates studying the convergence rate of CENet when the pair-elliptical assumption is violated, and exploring if there exists more general condition for which CENet still attains the optimal rate of convergence.  We leave this for future work.

\section*{Acknowledgements}
We thank Fang Han and Honglang Wang for providing R code for implementing the MRC estimator proposed in \cite{HanWang2015}; Mark Segal for sharing the cardiomyopathy dataset studied in \cite{Segal2003}; Andreas Buja, Zijian Guo, Po-Ling Loh, and Zongming Ma for helpful feedback that has helped improve the presentation of the paper.

\bibliographystyle{agsm}
\bibliography{Ref.bib}

\end{document}